\colorlet{MyBlue}{DodgerBlue!60!Black}
\colorlet{MyGreen}{DarkGreen!85!Black}
\numberwithin{equation}{section}  
\crefname{app}{Appendix}{Appendices}
\newcommand{\debug}[1]{#1}
\theoremstyle{plain}
\newtheorem{theorem}{Theorem}
\newtheorem{lemma}[theorem]{Lemma}
\newtheorem{proposition}[theorem]{Proposition}
\newtheorem{claim}[theorem]{Claim}
\theoremstyle{definition}
\newtheorem{definition}[theorem]{Definition}
\newtheorem{example}[theorem]{Example}
\numberwithin{theorem}{section}
\DeclarePairedDelimiter{\braces}{\{}{\}}
\DeclarePairedDelimiter{\bracks}{[}{]}
\DeclarePairedDelimiter{\parens}{(}{)}
\DeclarePairedDelimiter{\abs}{\lvert}{\rvert}
\DeclarePairedDelimiterX{\braket}[2]{\langle}{\rangle}{#1,#2}
\DeclarePairedDelimiterX{\inner}[2]{\langle}{\rangle}{#1,#2}
\DeclarePairedDelimiterX{\setdef}[2]{\{}{\}}{#1:#2}
\DeclarePairedDelimiterXPP{\probof}[1]{\Prob}{(}{)}{}{%

#1}
\DeclarePairedDelimiterXPP{\exof}[1]{\Expect}{[}{]}{}{%

#1}
\newcommand{\lipschitz}{\debug L}
\newcommand{\smooth}{\debug C}
\newcommand{\proba}{\debug q}
\newcommand{\dirac}{\debug \delta}
\newcommand{\stateth}{\debug \theta}
\newcommand{\states}{\debug \Theta}
\newcommand{\prior}{\debug p}
\newcommand{\simplex}{\debug \Delta}
\DeclareMathOperator{\Expect}{\mathsf{\debug{E}}}
\DeclareMathOperator{\Prob}{\mathsf{\debug{P}}}
\newcommand{\act}{\debug a}
\newcommand{\actalt}{\debug b}
\newcommand{\actions}{\mathcal{\debug A}}
\newcommand{\util}{\debug u}
\newcommand{\utilprof}{\boldsymbol{\util}}
\newcommand{\Util}{\debug U}
\newcommand{\cost}{\debug c}
\newcommand{\outcome}{\debug \mu}
\newcommand{\game}{\debug \Gamma}
\newcommand{\infostr}{\mathcal{\debug I}}
\newcommand{\potential}{\debug \Phi}
\newcommand{\type}{\debug \tau}
\newcommand{\typeprof}{\boldsymbol{\type}}
\newcommand{\types}{\mathcal{\debug T}}
\newcommand{\eq}[1]{#1^{\ast}}
\DeclareMathOperator{\WE}{\mathsf{\debug{WE}}}
\DeclareMathOperator{\BWespE}{\mathsf{\debug{BWE_{\varepsilon}}}}
\DeclareMathOperator{\BWalpE}{\mathsf{\debug{BWE_{\alpha}}}}
\DeclareMathOperator{\socialcost}{\mathsf{\debug{U}}}
\newcommand{\edges}{\mathcal{\debug E}}
\newcommand{\edge}{\debug e}
\newcommand{\flow}{\debug y}
\newcommand{\Flow}{\debug Y}
\newcommand{\flows}{\mathcal{\Flow}}
\newcommand{\flowprof}{\boldsymbol{\flow}}
\newcommand{\flowz}{\debug z}
\newcommand{\flowprofz}{\boldsymbol{\flowz}}
\newcommand{\floww}{\debug w}
\newcommand{\flowprofw}{\boldsymbol{\floww}}
\newcommand{\suppflows}{\eq{\flows}}
\newcommand{\load}{\debug x}
\newcommand{\reals}{\mathbb{R}}
\newcommand{\naturals}{\mathbb{N}}
\newcommand{\argdot}{\,\cdot\,}
\newcommand{\diff}{\ \textup{d}}
\newcommand{\run}{\debug n}
\newcommand{\ie}{i.e.,\ }
\newcommand{\eg}{e.g.,\ }
\DeclareMathOperator{\ind}{\mathds{\debug 1}}
\DeclareMathOperator{\prob}{\mathsf{\debug P}}
\newcommand{\gfunc}{\debug g}
\newcommand{\const}{\debug C}
\newcommand{\quantity}{\debug Q}
\newcommand{\invdem}{\debug \rho}
\newcommand{\pop}{\debug k}
\newcommand{\popalt}{\debug j}
\newcommand{\pops}{\mathcal{\debug K}}
\newcommand{\sizepop}{\debug \gamma}
\newcommand{\sizepopprof}{\boldsymbol{\sizepop}}
\newcommand{\prtype}{\debug \pi}
\newcommand{\setA}{\debug A}
\newcommand{\napproxim}{\debug N}
\newcommand{\error}{\debug \eta}
\newcommand{\modcont}{\debug \omega}
\newcommand{\flowak}{\flow_{\act}^{\pop}}
\newcommand{\flowaktype}{\flowak(\type^{\pop})}
\newcommand{\flowktypeprof}{\flowprof^{\pop}(\type^{\pop})}
\newcommand{\nat}{\debug K}
\newcommand{\descost}{\debug \psi}
\DeclareMathOperator{\Qmeas}{\mathsf{\debug Q}}
\newcommand{\acdef}[1]{\acfi{#1} \acused{#1}}
\newacro{AG}{anonymous game}
\newacro{ACG}{atomic congestion game}
\newacro{ACGSD}{atomic congestion game with stochastic demand}
\newacro{ANG}{nonatomic game}
\newacro{IIANG}{incomplete information anonymous nonatomic game}
\newacro{IIAG}{incomplete information anonymous game}
\newacro{BANG}{Bayesian  nonatomic game}
\newacro{PoA}{price of anarchy}
\newacro{PoS}{price of stability}
\newacro{TC}{total cost}
\newacro{TEC}{total expected cost}
\newacro{SO}{social optimum}
\newacro{SOC}{socially optimum cost}
\newacro{DC}{designer cost}
\newacro{NE}{Nash equilibrium}
\newacro{CE}{correlated equilibrium}
\newacro{CCE}{coarse correlated equilibrium}
\newacro{PNE}{pure Nash equilibrium}
\newacro{BNE}{Bayes Nash equilibrium}
\newacro{BCE}{Bayes correlated equilibrium}
\newacro{BCCE}{Bayes coarse correlated equilibrium}
\newacro{ENE}[$\varepsilon$-NE]{$\varepsilon$-Nash equilibrium}
\newacro{ECE}[$\varepsilon$-CE]{$\varepsilon$-correlated equilibrium}
\newacro{CCE}{coarse correlated equilibrium}
\newacro{PNE}{pure Nash equilibrium}
\newacro{EBNE}[$\varepsilon$-BNE]{$\varepsilon$-Bayes Nash equilibrium}
\newacro{EBCE}[$\varepsilon$-BCE]{$\varepsilon$-Bayes correlated equilibrium}
\newacro{BCCE}{Bayes coarse correlated equilibrium}
\newacro{CCWE}{coarse correlated Wardrop equilibrium}
\newacro{CBCWE}{coarse Bayes correlated Wardrop equilibrium}
\newacro{SBCWE}{simple Bayes correlated Wardrop equilibrium}
\newacro{WE}{Wardrop equilibrium}
\newacro{BWE}{Bayesian Wardrop equilibrium}
\newacro{BWepsE}[BW$\varepsilon$E]{Bayesian Wardrop $\varepsilon$-equilibrium}
\newacro{BWalpE}[BW$\alpha$E]{Bayesian Wardrop $\alpha$-equilibrium}
\newacro{CWE}{correlated Wardrop equilibrium}
\newacro{BDWE}[BDWE]{Bayes deterministic Wardrop equilibrium}
\newacro{BCWE}{Bayes correlated Wardrop equilibrium}
\newacro{KKT}{Karush\textendash Kuhn\textendash Tucker}
\newacro{OD}[OD]{origin-destination}
\newacro{BPR}{Bureau of Public Roads}
\newacro{SP}{series-parallel}
\begin{document}

\title{Information Design and Full Implementation in Nonatomic Games  \thanks{
Marco Scarsini is a member of GNAMPA-INdAM.
His work was partially supported by the  MIUR PRIN 2022EKNE5K ``Learning in markets and society,'' and the European Union-Next Generation EU Grant P2022XT8C8, component M4C2,
investment 1.1.
Frederic Koessler and Tristan Tomala gratefully acknowledge the support of the HEC foundation and the French National Research Agency under grant ANR SCOCOS (ANR-25-CE26-0626).} 
}
\author{Frederic Koessler%
\footnote{HEC Paris and GREGHEC-CNRS, 78351 Jouy-en-Josas, France. \textsf{koessler@hec.fr}}
\and
Marco Scarsini%
\footnote{Luiss University, 00197 Rome, Italy. \textsf{marco.scarsini@luiss.it}}
\and
Tristan Tomala%
\footnote{HEC Paris and GREGHEC, 78351 Jouy-en-Josas, France. \textsf{tomala@hec.fr}}
}

\maketitle

\begin{abstract}
 
This paper studies the implementation of Bayes correlated equilibria in symmetric Bayesian games with nonatomic players, using direct information structures and obedient strategies. 
The main results demonstrate full implementation in a class of games with negative payoff externalities, such as congestion and Cournot games. 
Specifically, if the game admits a strictly concave potential in every state, then for every Bayes correlated equilibrium outcome with finite support and rational action distributions, there exists a direct information structure that implements this outcome under all equilibria. 
When the potential is weakly concave, we show that all equilibria implement the same expected total payoff. 
Additionally, all Bayes correlated equilibria, including those with infinite support or irrational action distributions, are  approximately implemented.

\medskip \noindent \textsc{Keywords}: Bayes correlated equilibrium, 
congestion games, 
aggregative games, 
information design, 
nonatomic games, 
population games, 
potential games.

\end{abstract}

%
%

\clearpage

%
%
%
%

\section{Introduction}
\label{se:introduction}
The fundamental equilibrium concepts  used in information design for incomplete-information games with  finitely many players, are \acl{BNE} and \acl{BCE}.
The use of these concepts in games with a large number of players is often cumbersome and computationally intractable. 
Yet, large  games with symmetries can be approximated by nonatomic games, whose natural equilibrium concept in the case of complete information---\acl{WE}---is based on distributions over actions (called flows), rather than on the behavior of single players. 
This technique was recently extended to incomplete-information nonatomic games, and the concepts of \ac{BWE} and \ac{BCWE} were introduced and analyzed \citep[see][]{KoeScaTom:MOR2025}.

In this paper, we study partial and full implementation through information design in symmetric nonatomic games with incomplete information. 
In these, there is a payoff-relevant state, to which information will be correlated by a perfectly informed designer whose goal is---for instance---to maximize total welfare. 
In this context, the concept of \ac{BCWE} characterizes the set of state-dependent probability distributions over flows that can be induced in an incentive-compatible way by some information structure.
Furthermore, regardless of the designer's objective, a designer-optimal \ac{BCWE} can be formulated as a solution to an optimization program with a bounded number of variables and constraints, determined by the number of actions and states.

We show that approaching information design through the lens of nonatomic games provides important insights, particularly for achieving full implementation through direct recommendation systems. 
There is full implementation when \emph{all} equilibria implement the same outcome for the chosen information structure.
This property is crucial because it ensures robustness of the information design problem to adversarial equilibrium selection.
When only \emph{some} equilibria induce the desired outcome, we talk about \emph{partial} implementation.

Our basic game is a nonatomic game with symmetric and incomplete information.
Our analysis extends the class of population games studied by \citet{San:JET2001,San:MIT2010} to the Bayesian setting.%
\footnote{\citet{San:JET2001,San:MIT2010} allowed for multiple populations with different preferences and sets of feasible actions. 
For notational simplicity, in this paper we assume that the basic game involves only one population, so players are ex-ante symmetric. 
At the cost of cumbersome notation,  our results can be generalized to multiple populations.}
We extend this basic game to asymmetric information by considering a family of simple information structures \citep[as defined in][]{WuAmiOzd:OR2021}: 
The set of players is partitioned into a finite number of populations; within any given population, all players receive the same signal. 
Such an information structure is a \emph{direct information structure} if the set of possible signals is identical to the set of possible actions. 
A \ac{BWE} of the game extended with an information structure is said to be \emph{obedient} if all players in all populations follow their recommended action.

We first show that every \ac{BCWE} with finite support and rational flows can be partially implemented as an obedient \ac{BWE} with an appropriate direct information structure (\cref{thm:partialBI}).
Hence, a version of the revelation principle \citep{Mye:JME1982} applies to symmetric nonatomic  Bayesian games. 
Next, we turn to full implementation, which ensures that every \ac{BWE} outcome of the extended game leads to the desired \ac{BCWE}. 
We show that if the basic game admits a strictly concave potential in each state, then the \ac{BWE} outcome is unique for every information structure (\cref{thm:fullBI}). 
Consequently, information design becomes very effective in nonatomic games with a strictly concave potential because every \ac{BCWE} with finite support and rational flows can be fully implemented, \ie there exists a direct information structure such that the \ac{BCWE} is implemented, regardless of the \ac{BWE} selection.

If in each state the basic game admits a potential that is just weakly concave, full implementation of a given outcome may not be achieved, yet we get \emph{full payoff implementation}. 
We show that for every \ac{BCWE} with finite support and rational flows, there exists a direct information structure such that  expected total payoff is the same as the \ac{BCWE} expected total payoff regardless of the \ac{BWE} (\cref{thm:fullBI2}). 
In particular, the expected payoff of the total payoff-maximizing \ac{BCWE} can be fully implemented.

Similarly, approximate partial and full implementation results apply for all \ac{BCWE}, including those with infinite support and irrational flows. 
Specifically, for every \ac{BCWE} and for every $\varepsilon > 0$, an outcome that is arbitrarily close to that \ac{BCWE} can be implemented (partially or fully) as a \acdef{BWepsE}.

Our full (payoff) implementation results apply, in particular, to congestion games with decreasing payoff functions and to Cournot games.
A congestion game has a strictly concave potential if the set of actions is equal to the set of resources and the payoff of using a resource is strictly decreasing in the number of users. 
When actions are general feasible subsets of resources and payoffs are added across resources, congestion games admit a concave potential, provided that the payoff of using each resource is weakly decreasing in the number of users.  
A Cournot game admits a concave potential if inverse demand is strictly decreasing, and the potential is strictly concave  when possible quantities for each firm are binary (\eg entry and no entry).

%
%
%
%

\subsection{Related Literature}
\label{suse:literature-review}

The notion of Wardrop equilibrium originated from the work of \citet{War:PICE1952} on traffic flow theory, where he formulated principles describing user-optimal and system-optimal routing in transportation networks. 
\citet{BecMcGWin:Yale1956} proved that a Wardrop equilibrium is the solution of a convex optimization problem.
This idea was framed in the language of potential games  \citep[see, among others,][]{San:JET2001,Rou:AGT2007,RouTar:AGT2007}, generalizing the case of a finite number of players \citep{Ros:IJGT1973,MonSha:GEB1996} to the nonatomic case.
A general framework for nonatomic potential games, extending beyond routing games, was studied by \citet{San:JET2001,HofSan:JET2009,San:MIT2010}. 
The class of aggregative games---subclass of potential games---was implicitly introduced by \citet{DubMasShu:JET1980} and formalized by \citet{Cor:MSS1994}.
\citet{Lah:DGA2017} studied this class in games with nonatomic players and a finite set of actions.
 
\citet{KoeScaTom:MOR2025} extended this framework to nonatomic Bayesian games by studying the notions of correlated and Bayes correlated Wardrop equilibria. 
They showed that Bayes correlated Wardrop equilibria emerge as limits of action flows induced by Bayes correlated equilibria \citep{BerMor:TE2016} in finite-player games.\footnote{In \citet{KoeScaTom:MOR2025} players minimize cost functions, whereas here  players maximize payoff functions, which are simply the negatives of cost functions.}
Related nonatomic versions of correlated and Bayes correlated equilibria were studied, \eg by \citet{DiaMitRusSai:WINE2009,TavTen:mimeo2018}, and \citet{ZhuSav:IEEETCNS2022}.
\citet{AceMakMalOzd:OR2018} showed the relevance of information design in routing games, and proved that some classical paradoxes in routing games can be expressed in terms of  negative value of information.
Among the recent contributions on the topic, the reader is referred, for instance, to  \citet{MasLan:IEEETCNS2022,CiaAmbCom:IEEECDC2023,AmbCiaCom:IFACPOL2024}, and \citet{GriHoeKliKog:PAAAICAI2024}.
In a concurrent work focusing on congestion games and using  distributional approach,   \citet[chapter~5]{Mas:THESIS2025} considers  infinite information structures and obtains implementation results  related to ours.

To the best of our knowledge, our full implementation results with direct information structures have no analog in the literature on information design, where---even for binary action games---full implementation typically relies on complex indirect information structures, often with infinite type spaces. 
Existing tools and recent results in this area include, for example, \citet{MatPerTen:JPE2020,DwoPav:E2022,Hos:IER2022,InoPav:TE2025,LiSonZha:JET2023}, and \citet{MorOyaTak:E2024}. 
The literature on information design is surveyed in \citet{BerMor:JEL2019}, \citet{Kam:ARE2019} and \citet{For:AES2020}. \citet{Dug:SIGE2017} surveys the algorithmic literature on information design.

%
%
%
%

\subsection{Organization of the Paper}
\label{suse:organization}
\cref{se:model} presents the model, including the basic game, its Bayes correlated Wardrop equilibria, information structures with finite populations and public signals within populations, and Bayesian Wardrop ($\varepsilon$-)equilibria of the basic game extended with an information structure. 
\cref{se:examples} provides two simple examples and illustrates how to implement welfare-maximizing \ac{BCWE}. 
Our partial and full implementation results are presented in  \cref{se:partial-implementation,se:full-implementation}, respectively.
In the Appendix, 
\cref{se:optimality-deterministic} identifies a class of games for which deterministic flows maximize the designer's payoff function.
\cref{se:symbols} contains a list of symbols and acronyms used throughout the paper.

%
%
%

\subsection{Notation}
\label{suse:notation}
Given any finite set $\setA$ and $\sizepop>0$, we define
\begin{equation}
\label{eq:simplex}
\simplex_{\sizepop}(\setA) \coloneqq \braces*{\flowprof=\parens*{\flow_{\act}}_{\act} \colon \flow_{\act}  \ge 0, \sum_{\act\in \setA}\flow_{\act}=\sizepop}. 
\end{equation}

For the sake of simplicity, we use the shorthand $\simplex(\setA)$ to indicate $\simplex_{1}(\setA)$, which is just the simplex of probability measures on $\setA$. 
For a compact set $\flows$, $\simplex{(\flows)}$ denotes the set of Borel probability measures on $\flows$.
For $h \in \naturals$, the symbol $\smooth^{h}$ denotes the class of functions having an $h$-th derivative that is continuous in its domain.

%
%
%
%

\section{The Model}
\label{se:model}

We consider a basic game $\game$, which is a symmetric nonatomic Bayesian game with  a finite set of actions, and a finite set of states.
Without any loss of generality, throughout the paper, the total mass of players is assumed to be $1$.
To define the basic game, we need the concept of flow profile, which is the way the mass of players is distributed over the possible actions.
Given a finite set of  actions $\actions$, a flow profile $\flowprof \coloneqq (\flow_{\act})_{\act\in \actions}\in \flows$, is a nonnegative vector in $\flows \coloneqq \simplex(\actions)$, and $\flow_{\act}$ is the flow on action $\act$.

\begin{definition}
\label{de:BANG}    
A \emph{basic game} $\game = \parens*{\actions, \states,\prior,\utilprof}$ is given by a finite set of  actions $\actions$, a finite set of states $\states$, a full-support probability distribution $\prior\in\simplex(\states)$ over states, and a profile of payoff functions $\utilprof = (\util_{\act})_{\act\in \actions}$, where, for each $\act\in\actions$, $\util_{\act} \colon \flows \times \states \to \reals$ is continuous.
\end{definition}

For every $\act\in\actions$, $\stateth\in \states$ and $\flowprof\in \flows$, $\util_{\act}(\flowprof,\stateth)$ is the payoff of action $\act$ when the state is $\stateth$ and the flow profile in that state is $\flowprof$.
A transition  probability $\outcome \colon \states\to \simplex(\flows)$  that associates a distribution over  flows to each state is called an \emph{outcome} of the game. 

The notion of \acl{BCWE} in the next definition is the analog of the notion of \acl{BCE} \citep{BerMor:TE2016} in the context of nonatomic games \citep[see][]{KoeScaTom:MOR2025}.

\begin{definition}
\label{de:BCWE}
A \acdef{BCWE} of $\game$ is an outcome $\outcome:\states\to\simplex(\flows)$ such that
\begin{equation}
\label{eq:BCWE-def}
\sum_{\stateth\in\states} \prior(\stateth) \int \flow_{\act} \util_{\act}(\flowprof,\stateth) \diff \outcome(\flowprof \mid \stateth)
\ge
\sum_{\stateth\in\states} \prior(\stateth) \int \flow_{\act} \util_{\actalt}(\flowprof,\stateth) \diff \outcome(\flowprof \mid \stateth), \quad \forall\, \act,\actalt\in\actions.
\end{equation}
\end{definition}
For each state $\stateth\in\states$, a flow profile $\flowprof$ is drawn at random according to the distribution $\outcome(\,\cdot\mid\stateth)$. 
If we divide the two sides of \eqref{eq:BCWE-def} by the ex-ante expected mass of players choosing $\act$, then the left-hand-side is proportional to the  expected payoff of playing $\act$, conditionally of being recommended $\act$, and the right-hand side is proportional to the  expected payoff of playing $\actalt$, conditionally of being recommended $\act$. 
The corresponding outcome is a \ac{BCWE} if, for every action $\act$ that has strictly positive flow in at least one state, conditionally on being assigned $\act$, the expected payoff of playing $\act$ is greater or equal to the expected payoff of playing any other action $\actalt$. 

For example, suppose that  for every $\stateth$, $\flowprof(\stateth)$ is a Wardrop equilibrium of the complete information game $(\actions,(\util_{\act}(\argdot,\stateth))_{\act\in \actions})$, \ie 
\begin{equation}
\label{eq:Wardrop}
\forall \act,\actalt\in\actions, \ \flow_{\act}(\stateth)>0\implies\util_{\act}(\flowprof(\stateth),\stateth)\geq\util_{\actalt}(\flowprof(\stateth),\stateth).   
\end{equation}
Then $\outcome(\stateth) = \dirac_{\flowprof(\stateth)}$ is a ``fully revealing'' \ac{BCWE}. 

As another example, if $\bar\flowprof$ is a Wardrop equilibrium of the average game with complete information $(\actions,(\bar\util_{\act})_{\act\in \actions})$, with $\bar \util_{\act}(\argdot) = \sum_{\stateth}\prior(\stateth)\util_{\act}(\argdot,\stateth)$, then $\outcome(\stateth) = \dirac_{\bar\flowprof}$ is a ``non-revealing'' \ac{BCWE}. 
This corresponds to an equilibrium outcome of the basic game with no information.

We now consider  finite  information structures and Bayesian Wardrop equilibria as defined in \citet{WuAmiOzd:OR2021}. Using finite information structure has the following merits: 
On one hand this avoids  measurability issues related to private signals for  uncountable sets of players.
On the other hand, the approach is flexible enough to guarantee the possibility of arbitrarily fine information diversification.

\begin{definition}
\label{de:info-structure}
An \emph{information structure} $\infostr=\parens*{\sizepopprof, \types, \prtype}$ for   
$\game$ is given by: 
\begin{itemize}
\item 
a vector $\sizepopprof = (\sizepop^{\pop})_{\pop\in\pops} \in \simplex(\pops)$, where $\pops$ is a finite set of populations and, for every $\pop\in \pops$, $\sizepop^\pop > 0$ is the size of population $\pop$; 
\item 
for each population $\pop$, a finite  set of types $\types^{\pop}$; 

\item 
a mapping $\prtype \colon \states \to \simplex(\types)$, where $\types = \times_{\pop\in\pops} \types^{\pop}$.
\end{itemize}
\end{definition}

The game $\game$ extended by information structure $\infostr$, denoted by $(\game,\infostr)$, can be described as follows:
for each state $\stateth\in\states$, a type profile $\typeprof\in\types$ is drawn with probability $\prtype(\typeprof \mid \stateth)$ and, for every $\pop$, the entire population $\pop$ observes the type $\type^{\pop} \in \types^{\pop}$. 
Then, actions are chosen.

For all $\pop\in\pops$, $\type^{\pop}\in\types^{\pop}$ and $\act\in\actions$, we let $\flowaktype$ 
denote  the flow on action $\act$ of population $\pop$, when type $\type^{\pop}$ is observed. 
The  \emph{interim flow profile of population $\pop$ given type $\type^{\pop}$} is $\flowktypeprof =\parens*{\flowaktype}_{\act\in\actions}\in \simplex_{\sizepop^{\pop}}(\actions)$. 
The list 
\begin{equation}
\label{eq:IFP}
\widehat{\flowprof}=\parens*{\flowktypeprof}_{\pop,\type^{\pop}} \in \bigtimes_{\pop\in\pops}\bracks*{\bigtimes_{\type^{\pop}\in\types^{\pop}} \simplex_{\sizepop^{\pop}}(\actions)},
\end{equation}
of  interim flows profiles of all populations is called \emph{interim flow profile}.  
For each type profile  $\typeprof=(\type^{\pop})_{\pop\in\pops}$, we let $\flow_{\act}(\typeprof) = \sum_{\pop\in\pops} \flowaktype$ be the \emph{interim total flow on action $\act$} and $\flowprof(\typeprof) = \parens*{\flow_{\act}(\typeprof)}_{\act\in\actions}\in \simplex(\actions)$ be the \emph{interim total flow profile}, given $\typeprof$.
We let  $\prob(\type^{\pop})=\sum_{\stateth\in\states}\sum_{\typeprof^{-\pop}\in\types^{-\pop}}  \prior(\stateth) \prtype(\type^{\pop},\typeprof^{-\pop} \mid \stateth)$ denote the total probability of type $\type^{\pop}$  induced by $\prior$ and $\prtype$, where $\types^{-\pop} \coloneqq \times_{\popalt\neq \pop} \types^{\popalt}$. 
Under the interim flow profile $\widehat\flowprof$, the payoff of population $\pop$, given type $\type^\pop$ and action $\act$, such that $\flowaktype>0$, is
\begin{equation}
\label{eq:caktk}  
\sum_{\stateth\in\states}\sum_{\typeprof^{-\pop}\in\types^{-\pop}}  
\frac{\prior(\stateth) \prtype(\type^{\pop},\typeprof^{-\pop} \mid \stateth)}{\prob(\type^{\pop})}
\util_{\act}(\flowprof(\typeprof),\stateth)
\eqqcolon \util_{\act}^{\pop,\type^\pop}(\widehat\flowprof).
\end{equation}
The list $(\util_{\act}^{\pop,\type^\pop}(\widehat\flowprof))_{\pop,\type^\pop,\act}$ is called an \emph{interim payoff profile}.

The following definition provides an analog of the usual  Bayesian Nash ($\varepsilon$-)equilibrium  for nonatomic games.

\begin{definition} 
\label{de:BWepsE}
For $\varepsilon\ge0$, a \acdef{BWepsE} of $(\game,\infostr)$  is an interim flow  profile $\widehat\flowprof=\parens*{\flowktypeprof}_{\pop,\type^{\pop}}$ such that, for all $\pop\in\pops$, for all $\type^{\pop}\in\types^{\pop}$ with $\prob(\type^{\pop})>0$, and for all $\act,\actalt\in\actions$, if $\flow_{\act}^{\pop}(\type^{\pop}) > 0$, then
\begin{equation}
\label{eq:epsBWE-cond-1}
\sum_{\stateth\in\states}\sum_{\typeprof^{-\pop}\in\types^{-\pop}}  
\frac{\prior(\stateth) \prtype(\type^{\pop},\typeprof^{-\pop} \mid \stateth)}{\prob(\type^{\pop})}
\util_{\act}(\flowprof(\typeprof),\stateth) 
\geq
\sum_{\stateth\in\states}\sum_{\typeprof^{-\pop}\in\types^{-\pop}}  
\frac{\prior(\stateth) \prtype(\type^{\pop},\typeprof^{-\pop} \mid \stateth)}{\prob(\type^{\pop})}
\util_{\actalt}(\flowprof(\typeprof),\stateth)-\varepsilon.
\end{equation}
\end{definition}

That is, for each population $\pop$ that observes type $\type^{\pop}$, a positive mass  chooses action $\act$ only if, up to $\varepsilon$, it yields the highest  conditional  expected payoff. 
Notice that this is equivalent to the following inequality for all $\pop\in\pops$,  all $\type^{\pop}\in\types^{\pop}$  and  all $\act,\actalt\in\actions$ such that $\flow_{\act}^{\pop}(\type^{\pop}) > 0$:

\begin{equation}
\label{eq:epsBWE-cond}
\sum_{\stateth\in\states}\sum_{\typeprof^{-\pop}\in\types^{-\pop}}  
\prior(\stateth) \prtype(\type^{\pop},\typeprof^{-\pop} \mid \stateth)
\util_{\act}(\flowprof(\typeprof),\stateth) 
\geq
\sum_{\stateth\in\states}\sum_{\typeprof^{-\pop}\in\types^{-\pop}}  
\prior(\stateth) \prtype(\type^{\pop},\typeprof^{-\pop} \mid \stateth)
\util_{\actalt}(\flowprof(\typeprof),\stateth)-\varepsilon\prob(\type^{\pop}).
\end{equation}
A \acdef{BWE} is  a \acl{BWepsE} with $\varepsilon=0$.

An  outcome $\outcome \colon \states \to \simplex(\flows)$ is a \emph{\acl{BWepsE} outcome} of   $(\game,\infostr)$  if  there exists a \acl{BWepsE}  $\parens*{\flowktypeprof}_{\pop,\type^{\pop}}$  such that  for every $\stateth\in\states$ and for every $\flowprofz\in\flows$,
\begin{equation}
\label{eq:mu-pi}
\outcome(\flowprofz \mid \stateth) = \sum_{\typeprof \colon \flowprof(\typeprof)=\flowprofz} \prtype(\typeprof \mid \stateth).
\end{equation}
We let $\BWespE(\infostr)$ be the set of \acl{BWepsE} outcomes of   $(\game,\infostr)$.

An information structure is \emph{direct} if $\types^{\pop}=\actions$ for all $\pop\in\pops$. 
The associated \emph{obedient}  interim flow profile is the one obtained when population $\pop$ observes 
$\type^{\pop}=\act$, interprets it as an action recommendation, and obeys it by choosing $\act$. 
For every direct information structure $\infostr$, we let  $\breve\flowprof_{\infostr}$ denote the obedient interim flow profile, defined as 
$\breve\flowprof_{\infostr, \act}^{\pop}(\type^{\pop}) \coloneqq \sizepop^{\pop}\cdot\ind\braces*{\type^{\pop}=\act}$, for all $\pop\in\pops,\type^{\pop}\in\types^{\pop},\act\in\actions$. The associated outcome is denoted by $\breve\outcome_{\infostr}$.

%
%
%
%

\section{Examples}
\label{se:examples}

The following examples illustrate the notion of \ac{BCWE} in two simple games and show how to implement the total welfare-maximizing \ac{BCWE} as an obedient \ac{BWE} of the game extended with a direct information structure. 
The examples also show that restricting attention to public information or deterministic flows is not without loss of generality in terms of efficiency.

\begin{example}[Platform access with network effects and congestion]
\label{ex-El-Farol}
In the following game, studied by 
\citet{MitSaaSai:SAGT2013}, there is a single state, and the action set is $\actions = \braces*{\act,\actalt}$. 
The payoff functions are:  
\begin{equation}
\label{eq:cost-el-farol}
\util_{\act}(\flowprof) = 1, \quad 
\util_{\actalt}(\flowprof) = \min\braces*{4\flow_{\actalt},4-4\flow_{\actalt}}.
\end{equation}
\cref{eq:cost-el-farol} says that action $\actalt$ generates positive payoff externalities when $\flow_{\actalt} < 1/2$ and negative payoff externalities when $\flow_{\actalt} > 1/2$. 
This can be interpreted as a platform access decision, where action $\actalt$ corresponds to joining a platform whose value increases with moderate participation due to network effects but deteriorates when participation becomes too high because of congestion, while action $\act$ represents a safe outside option with constant payoff 1.
This game without information admits three \ac{BWE} outcomes: $(1,0)$, $(3/4,1/4)$, and $(1/4,3/4)$, all of which yield a total payoff of $1$. 
Public information induces a mixing over these outcomes, so the expected total payoff remains $1$.
Consider now the flows $\flowprofz = (1,0)$ and $\flowprofw=(1/2,1/2)$. 
The outcome $\outcome$, defined by  
\begin{equation}
\label{eq:CWE-el-farol}
\outcome(\flowprofz)=\frac{1}{3},\quad
\outcome(\flowprofw)=\frac{2}{3},
\end{equation}
is a \ac{BCWE}, since it satisfies the equilibrium constraints:  
\begin{equation}
\label{eq:equilibrium-constraints}
\begin{split}
\frac{2}{3} =
\frac{1}{3} \flowz_{\act} \util_{\act}(\flowprofz) +
\frac{2}{3} \floww_{\act} \util_{\act}(\flowprofw) 
&\ge
\frac{1}{3} \flowz_{\act} \util_{\actalt}(\flowprofz) +
\frac{2}{3} \floww_{\act} \util_{\actalt}(\flowprofw) =
\frac{2}{3},\\
\frac{2}{3} =
\frac{1}{3} \flowz_{\actalt} \util_{\actalt}(\flowprofz) +
\frac{2}{3} \floww_{\actalt} \util_{\actalt}(\flowprofw) 
&\ge
\frac{1}{3} \flowz_{\actalt} \util_{\act}(\flowprofz) +
\frac{2}{3} \floww_{\actalt} \util_{\act}(\flowprofw) =
\frac{1}{3}.
\end{split}
\end{equation}
The total payoff of this \ac{BCWE} is $4/3$, which is higher than the total payoff with public information. 
From the platform’s perspective, private recommendations are used to correlate users’ decisions so that the first-best participation level $(1/2,1/2)$ is achieved with probability $2/3$, while the inefficient outcome in which all players choose $\actalt$ never occurs.

The welfare-maximizing \ac{BCWE} is (partially) implemented as a \ac{BWE} with the following direct information structure: there are two populations of size $\sizepop^{1} = \sizepop^{2} = 1/2$, and
\begin{equation}
\label{eq:pi-ab}   
\prtype(\act,\act) = \prtype(\act,\actalt) = \prtype(\actalt,\act) = \frac{1}{3}.
\end{equation}
Operationally, the platform can achieve this outcome by partitioning users into two equally sized groups and privately recommending access to at most one group at a time.
\label{page:IS-elfarol}
\end{example}

\begin{example}[Entry on a market with uncertain production cost]
\label{ex:cournot-binary}

We consider a  game with binary action set $\actions=\braces*{\act,\actalt}$, where action $\act$ indicates staying out of the market, and action $\actalt$ indicates entering the market. 
Firms on the market produce one unit each and sell it at a market price decreasing with total production. 
The state $\stateth$ is the common production cost, which can take values $0$ or $1$. 
This game is a particular instance of the Cournot model described in \cref{se:potental-gammes}.

So we let $\states=\braces*{0,1}$, and for $\proba\in(0,1)$, 
\begin{equation}
\label{eq:prior-state}
\prior(0)=1-\proba,
\quad\text{and}\quad
\prior(1)=\proba. 
\end{equation}

Let us consider deterministic outcomes, where the flow profile in state $\stateth$ is some $\flowprof(\stateth)=\parens*{\flow_{\act}(\stateth),\flow_{\actalt}(\stateth)} = \parens*{1-\flow_{\actalt}(\stateth),\flow_{\actalt}(\stateth)}\in\flows$.
The inverse demand depends linearly on total production and is given by 
\begin{equation}
\label{eq:inverse-demand} 
\invdem(\flow_{\actalt}(\stateth)) = 1-\flow_{\actalt}(\stateth).
\end{equation}
Firms maximize profits, which are
\begin{equation}
\label{eq:profit}  
\begin{split}
\util_{\act}(\flowprof,\stateth)
&=
0,\\
\util_{\actalt}(\flowprof,\stateth)
&=
\invdem(\flow_{\actalt}(\stateth)) - \stateth. 
\end{split}
\end{equation} 
The total profit in state $\stateth$ is 
\begin{equation}
\label{eq:total-profit}
\flow_{\act}(\stateth)\util_{\act}(\flowprof(\stateth),\stateth) + \flow_{\actalt}(\stateth)\util_{\actalt}(\flowprof(\stateth),\stateth)
=\flow_{\actalt}(\stateth)(1 - \flow_{\actalt}(\stateth) - \stateth).
\end{equation}
The expected total profit is
\begin{equation}
\label{eq:expected-total-profit}
(1-\proba)\,\flow_{\actalt}(0)(1 - \flow_{\actalt}(0)) - \proba\,\flow_{\actalt}(1)\flow_{\actalt}(1).
\end{equation}
The maximizer over $[0,1]^{2}$ is $\parens*{\flow_{\actalt}(0)=1/2,\flow_{\actalt}(1)=0}$, with expected total profit equal to $(1-\proba)/4$.

For deterministic outcomes, the \ac{BCWE} equilibrium constraints reduce to:
\begin{align}
\label{eq:OB0-cournot}
(1-\proba)(1-\flow_{\actalt}(0))^{2}
&\le \proba\,\flow_{\actalt}(1)(1-\flow_{\actalt}(1)),\\
\label{eq:OB1-cournot}
(1-\proba)\flow_{\actalt}(0)(1-\flow_{\actalt}(0))
&\ge \proba\,(\flow_{\actalt}(1))^{2}.
\end{align}
Maximizing expected total profit subject to
\cref{eq:OB1-cournot,eq:OB0-cournot} yields a unique solution with
\eqref{eq:OB0-cournot} binding and \eqref{eq:OB1-cournot} slack:
\begin{equation}
\label{eq:unique-sol}
\flow_{\actalt}^{*}(0)=1-\frac{\sqrt{\proba}}{2},
\quad
\flow_{\actalt}^{*}(1)=\frac{1-\sqrt{\proba}}{2}.    
\end{equation}

The associated expected total profit is $\parens*{\sqrt{\proba}-\proba}/2$, which is strictly smaller than the unconstrained benchmark but strictly larger than the expected total profit achievable under public information, \ie $0$. 
By \cref{pr:BDWE-cost}, restricting attention to deterministic flows is without loss of optimality in this binary-action environment, so $(\flow_{\actalt}^{*}(0),\flow_{\actalt}^{*}(1))$ is the total profit-maximizing \ac{BCWE} outcome.

Similarly to the previous example, as long as $\sqrt{\proba}$ is a rational number, this deterministic \ac{BCWE} can be implemented as an obedient \ac{BWE} using a direct information structure with finitely many populations, by recommending action $\actalt$ to a fraction $\flow_{\actalt}^{*}(\stateth)$ of firms in each state $\stateth$. 
For example, if $\proba=1/4$, then 
\begin{equation}
\label{eq:}
\flow_{\actalt}^{*}(0)=1-\frac{\sqrt \proba}{2}=\frac{3}{4},
\qquad
\flow_{\actalt}^{*}(1)=\frac{1-\sqrt \proba}{2}=\frac{1}{4}.    
\end{equation}

\label{page:IS-Cournot}
Consider four populations of equal size $\sizepop^{\pop}=1/4$ for every $\pop$, and a direct information structure such that, in state $0$, exactly three populations are recommended action $\actalt$ and one population is recommended action $\act$, whereas in state $1$ exactly one population is recommended action $\actalt$ and three populations are recommended action $\act$ (uniformly over the choice of the recommended populations). 
This induces the deterministic outcome $(\flow_{\actalt}(0),\flow_{\actalt}(1))=(3/4,1/4)$. 
As shown in \cref{se:full-implementation}, this total profit-maximizing outcome is fully implemented in this game because it is a concave potential game. 
Moreover, when $\sqrt{\proba}$ is not rational, the total profit-maximizing outcome can be implemented with arbitrary precision by using a sufficiently large number of populations.
\end{example}

%
%
%
%

\section{Partial Implementation}
\label{se:partial-implementation}

It is immediate  that any \ac{BWE} outcome, for any possible  information structure (direct or indirect), is  in the set of \acp{BCWE}. 
This simple observation explicitly illustrates how to construct \emph{some} \ac{BCWE}: choose an information structure $\parens*{\sizepopprof, \types, \prtype}$ for disclosing information, and  choose a \ac{BWE} of the basic game extended with this information structure.

The next theorem establishes the other direction of the revelation principle (\citealp{Mye:JME1982}, \citealp{BerMor:TE2016}), at least approximately. 
Specifically, it shows that if $\outcome$ is a \ac{BCWE} with finite support and rational flows, then there exists a direct information structure such that $\outcome$ is implemented as an obedient \ac{BWE} outcome. 
If the support of $\outcome$ is not finite or if some flows in the support are not rational, then $\outcome$ can be approximately implemented as a Bayesian Wardrop $\varepsilon$-equilibrium outcome for every $\varepsilon > 0$.

\begin{theorem}
\label{thm:partialBI}
Let $\outcome$ be a \ac{BCWE} 
and $\descost \colon \flows \times \states \to \reals$ be a continuous function. 
For all 
$\varepsilon>0$, there exists a direct information structure $\infostr$ such that the obedient interim flow profile  $\breve\flowprof_{\infostr}$  is a  \acl{BWepsE} of   $(\game,\infostr)$ and
\begin{equation}
\label{eq:eps-approx-BW}
\abs*{\int  \descost(\flowprof,\stateth) \diff \outcome(\flowprof\mid \stateth)
-
\int  \descost(\flowprof,\stateth) \diff \breve\outcome_{\infostr}(\flowprof\mid \stateth)}
\leq\varepsilon, \quad \forall \stateth\in\states.
\end{equation}
Further,  if $\outcome$ has finite support and if all flows in the support of $\outcome$ are rational, then there exists an information structure $\infostr$ such that  $\varepsilon = 0$ and $\outcome=\breve\outcome_{\infostr}$.
\end{theorem}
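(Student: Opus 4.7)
The plan is to give a constructive proof along the lines of the revelation principle: for a \ac{BCWE} $\outcome$, I build a direct information structure $\infostr$ whose obedient interim flow profile $\breve\flowprof_\infostr$ is a \ac{BWE} (respectively a \ac{BWepsE}) that induces $\outcome$ (respectively an outcome within $\varepsilon$ of it). The construction uses enough equal-size populations to accommodate the rational flow values in $\supp\outcome$, with action recommendations generated symmetrically across populations so that no single population can be distinguished ex ante.

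For the finite-support, rational-flow case, enumerate $\bigcup_{\state\in\states}\supp\outcome(\argdot\mid\state)=\{\flowprof^1,\ldots,\flowprof^N\}$ and let $\nat$ be a common denominator, so that $\flow_\act^j=n_\act^j/\nat$ for non-negative integers $n_\act^j$. I take $\nat$ equal-size populations with $\sizepop^\pop=1/\nat$ and type space $\types^\pop=\actions$, and define $\prtype(\argdot\mid\state)$ in two stages: (i) draw $j\in\{1,\ldots,N\}$ with probability $\outcome(\flowprof^j\mid\state)$; (ii) conditionally on $j$, draw $\typeprof$ as a uniformly random permutation of the multiset that contains each action $\act$ with multiplicity $n_\act^j$. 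Under obedience, the total flow on $\act$ when $j$ is realized equals $n_\act^j/\nat=\flow_\act^j$, so the induced outcome satisfies $\breve\outcome_\infostr=\outcome$. The uniformity of the permutation in (ii) makes the $\nat$ populations exchangeable, so that conditionally on $(\state,j)$, population $\pop$ is recommended $\act$ with probability $\flow_\act^j$, independently of $\pop$. A direct computation then shows that, after cancelling $\prob(\type^\pop=\act)$, the obedience condition preventing a $\pop$-player recommended $\act$ from deviating to $\actalt$ reduces exactly to the \ac{BCWE} inequality \eqref{eq:BCWE-def}, which holds by assumption. Hence $\breve\flowprof_\infostr$ is a \ac{BWE}.

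For a general \ac{BCWE} $\outcome$, I approximate $\outcome(\argdot\mid\state)$ in the weak-$*$ topology by a discrete measure $\tilde\outcome(\argdot\mid\state)$ with finite support on rational flows: partition the compact simplex $\flows$ into Borel cells $B_1,\ldots,B_M$ of sufficiently small diameter, each containing a rational representative $\flowprof^j\in B_j$, and set $\tilde\outcome(\flowprof^j\mid\state)=\outcome(B_j\mid\state)$. Continuity of $\descost$ on $\flows\times\states$ yields \eqref{eq:eps-approx-BW} for small enough mesh. Uniform continuity of each $\cost_\act$ ensures that the integrals in \eqref{eq:BCWE-def} evaluated at $\tilde\outcome$ differ from those at $\outcome$ by an arbitrarily small amount, so $\tilde\outcome$ is an approximate \ac{BCWE} whose incentive slack can be driven below any prescribed threshold. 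Applying the finite-rational construction to $\tilde\outcome$ then produces a direct information structure whose obedient profile is a \ac{BWepsE} with outcome $\tilde\outcome$ within $\varepsilon$ of $\outcome$ in the sense of \eqref{eq:eps-approx-BW}.

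The main obstacle is step (ii) of the construction: the uniform random permutation of recommendations is precisely what equalizes the posteriors of the $\nat$ a priori distinguishable populations, so that their $\nat\cdot|\actions|$ obedience inequalities all collapse onto the single \ac{BCWE} incentive inequality \eqref{eq:BCWE-def}. Without this symmetrization, the distinguishable population structure would generate a much larger system of constraints that would not, in general, be satisfiable by the \ac{BCWE} data alone.
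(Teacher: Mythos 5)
Your construction is essentially the one in the paper: the same $\nat$ equal-size populations with a common denominator, the same uniformly symmetrized assignment of recommendations so that each population is recommended $\act$ with probability $\flow_\act^j$ conditional on $(\state,j)$, and the same collapse of all obedience constraints onto the single \ac{BCWE} inequality \eqref{eq:BCWE-def}. The exact finite-support rational case is handled identically and is correct. In the general case you discretize $\outcome(\argdot\mid\state)$ directly via a fine partition of $\flows$ with rational representatives, whereas the paper first invokes a density lemma (finite-support \ac{BCWE} are dense) and then perturbs the finitely many support points to rationals; both routes deliver \eqref{eq:eps-approx-BW} by uniform continuity of $\descost$.

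There is, however, one step you pass over too quickly. The \acl{BWepsE} condition is stated in terms of \emph{conditional} expected costs, i.e., the obedience inequality divided by $\prob(\type^{\pop}=\act)$, while the incentive slack you control via uniform continuity of $\flowprof\mapsto\flow_\act\cost_\actalt(\flowprof,\state)$ is the \emph{unnormalized} one. To conclude that the obedient profile is a \acl{BWepsE} for the prescribed $\varepsilon$, you must divide that slack by $\Qmeas(\act)=\sum_\state\sum_{\flowprof}\prior(\state)\tilde\outcome(\flowprof\mid\state)\tilde\flow_\act$, and this must be bounded away from zero \emph{uniformly in the mesh} for every action recommended with positive probability. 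The paper does this explicitly (its $\varepsilon_0$, the smallest positive flow in the finite support, and the bound $\Qmeas(\act)\geq\varepsilon_0/2$). In your cell-based discretization the issue is sharper: a cell may consist mostly of flows with $\flow_\act=0$ yet receive a rational representative with a tiny positive $\act$-component, so that $\act$ is recommended with vanishingly small probability and the normalized slack $2\modcont(\error)/\Qmeas(\act)$ blows up, breaking the $\varepsilon$-obedience for that (positive-probability) type. The fix is routine---restrict the discretization to the face of $\flows$ spanned by actions with positive aggregate flow under $\outcome$ (any action with $\sum_\state\prior(\state)\int\flow_\act\diff\outcome(\flowprof\mid\state)=0$ has $\flow_\act=0$ almost surely and can be excluded from recommendations), and then bound $\Qmeas(\act)$ below by half its limiting value---but as written the claim that the ``incentive slack can be driven below any prescribed threshold'' conflates the two notions of slack and leaves a genuine hole.
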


In \cref{thm:partialBI}, the function $\descost$ represents any possible designer welfare function, and \cref{eq:eps-approx-BW} means that for every state, the interim expected payoff of the designer under outcome $\breve\outcome_{\infostr}$ is $\varepsilon$-close to the interim expected payoff under the \ac{BCWE} $\outcome$. 
The proof of the theorem also shows that, for every $\lipschitz> 0$,  the same $\infostr$ can be chosen for all $\lipschitz$-Lipschitz functions $\descost$. 

\begin{proof}[Proof of \cref{thm:partialBI}]  
\citet[Lemma~2]{KoeScaTom:MOR2025} prove  that $\ac{BCWE}$ with finite support are dense within the set of \aclp{BCWE}. 
So, it is enough to prove the result for $\outcome$ with finite support.
Fix $\varepsilon>0$ and a \ac{BCWE}   $\outcome$  with finite support $\suppflows \subset \flows$. Let $\actions^*$ be the set of actions $a$ actually used under $\outcome$, \ie such that
\begin{equation}  
\sum_{\stateth\in\states}\sum_{\flowprof\in\suppflows}\prior(\stateth)\outcome(\flowprof\mid\stateth)\flow_{\act}>0.
\end{equation}
Let then
\begin{equation}
\varepsilon_{0}=\min\Big\{\sum_{\stateth\in\states}\sum_{\flowprof\in\suppflows}\prior(\stateth)\outcome(\flowprof\mid\stateth)\flow_{\act} \text{ s.t. } \act\in\actions^*\Big\}>0.
\end{equation}

Fix an arbitrary $\error>0$ and approximate flows in $\suppflows$ by   vectors with  rational components as follows. 
There exists $\nat\in\naturals_+$, such that for every $\flowprof\in\suppflows$ and every $\act\in\actions^*$,  there exists $\napproxim(\flow_{\act})\in\naturals$,  such that,
\begin{equation}
\label{eq:approx-rational}
\sum_{\act\in\actions^*} \napproxim(\flow_{\act}) = \nat
\quad\text{and}\quad
\forall\act\in\actions^*,
\abs*{\frac{\napproxim(\flow_{\act})}{\nat} - \flow_{\act}} \le \error.
\end{equation}
For each $\flowprof\in\suppflows$, let $\widetilde\flowprof$ denote the flow profile $(\widetilde\flow_{\act})_{\act\in\actions^*}$ with $\widetilde\flow_{\act}= \napproxim(\flow_{\act})/\nat$.

We construct a direct information structure $\infostr$ as follows. 
There are $|\pops|=\nat$ populations and each population $\pop$  has the same size $\sizepop^{\pop} = 1/\nat$.
For each population $\pop$, the set of types is $\types^{\pop}=\actions^*$. 
Conditionally on state $\stateth$, a flow profile $\flowprof\in\suppflows$ is drawn with probability $\outcome(\flowprof\mid\stateth)$, then  action $\act$ is recommended to a subset of   $\napproxim(\flow_{\act})$ populations chosen uniformly, so that, conditionally on $\flowprof$ and $\stateth$, the probability that population $\pop$ is recommended $\act$ is $\widetilde\flow_{\act}$.
We let $\breve\outcome_{\infostr}$ be the induced obedient outcome which puts probability $\outcome(\flowprof\mid\stateth)$ on $\widetilde\flowprof$ in state $\stateth$.

For each population $\pop$, the  conditional expected payoff  of playing $\actalt$ when  $\act$ is recommended is
\begin{equation}
\label{eq:nonnormalized-expected-cost}
\sum_{\stateth\in\states}\sum_{\flowprof\in\suppflows}
\frac{\prior(\stateth)\outcome(\flowprof\mid\stateth)\widetilde\flow_{\act}}{\Qmeas(\act)}
\util_{\actalt}(\widetilde\flowprof,\stateth),
\end{equation}
with 
\begin{equation}
\label{eq:Q-k}  
\Qmeas(\act)=\sum_{\stateth\in\states}\sum_{\flowprof\in\suppflows}\prior(\stateth)\outcome(\flowprof\mid\stateth)\widetilde\flow_{\act}.
\end{equation}
For each pair of actions $(\act,\actalt)$, the mapping $\flowprof \mapsto \flow_{\act}\util_{\actalt}(\flowprof,\stateth)$ is uniformly continuous on the compact set $\flows$. 
Thus, there exists a common modulus of continuity $\modcont(\argdot)$ with $\lim_{\error\searrow 0}\modcont(\error)=0$ such that, for all $\stateth\in\states$, for all $(\act,\actalt)\in \actions\times \actions$, for all $(\flowprof,\flowprofz)\in\flows\times\flows$,
\begin{equation}
\label{eq:ModulusCont} 
\abs*{\flow_{\act}\util_{\actalt}(\flowprof,\stateth)- \flowz_{\act}\util_{\actalt}(\flowprofz,\stateth)}\leq\modcont(\max_{\actalt}\abs{\flow_{\actalt}-\flowz_{\actalt}}).
\end{equation}
Since $\outcome$ is \ac{BCWE}, for all $\act,\actalt\in\actions$,
\begin{equation}
\label{eq:BCWE-cond}
\sum_{\stateth\in\states}\sum_{\flowprof\in\suppflows}
\prior(\stateth)\outcome(\flowprof\mid\stateth) \flow_{\act} \util_{\act}(\flowprof,\stateth)
\ge
\sum_{\stateth\in\states}\sum_{\flowprof\in\suppflows}
\prior(\stateth)\outcome(\flowprof\mid\stateth) \flow_{\act} \util_{\actalt}(\flowprof,\stateth).
\end{equation}
From \cref{eq:BCWE-cond,eq:approx-rational,eq:ModulusCont} it follows that
\begin{equation}
\label{eq:ineq-mod-cont}
\sum_{\stateth\in\states}\sum_{\flowprof\in\suppflows}\prior(\stateth)\outcome(\flowprof\mid\stateth)\widetilde\flow_{\act}\util_{\act}(\widetilde\flowprof,\stateth)
\ge
\sum_{\stateth\in\states}\sum_{\flowprof\in\suppflows}\prior(\stateth)\outcome(\flowprof\mid\stateth)\widetilde\flow_{\act}\util_{\actalt}(\widetilde\flowprof,\stateth)-2\modcont(\error).
\end{equation}
Thus
\begin{equation}
\label{eq:ineq-mod-cont-2}
\sum_{\stateth\in\states}\sum_{\flowprof\in\suppflows}
\frac{\prior(\stateth)\outcome(\flowprof\mid\stateth)\widetilde\flow_{\act}}{\Qmeas(\act)} 
\util_{\act}(\widetilde\flowprof,\stateth)
\ge
\sum_{\stateth\in\states}\sum_{\flowprof\in\suppflows}
\frac{\prior(\stateth)\outcome(\flowprof\mid\stateth)\widetilde\flow_{\act}}{\Qmeas(\act)} 
\util_{\actalt}(\widetilde\flowprof,\stateth)
-
\frac{2\modcont(\error)}{\Qmeas(\act)}.
\end{equation}
From \cref{eq:approx-rational} we have
\begin{equation}
\label{eq:P-k-a}    
\Qmeas(\act)=
\sum_{\stateth\in\states}\sum_{\flowprof\in\suppflows}\prior(\stateth)\outcome(\flowprof\mid\stateth)\widetilde\flow_{\act}
\geq
\sum_{\stateth\in\states}\sum_{\flowprof\in\suppflows}\prior(\stateth)\outcome(\flowprof\mid\stateth)(\flow_{\act}-\error)
\geq
\varepsilon_{0}-\error\geq \frac{\varepsilon_{0}}{2},
\end{equation}
for $\error\leq  \varepsilon_{0}/2$. 
It follows that \begin{equation}
\label{eq:ineq-omega-eta-eps}
\frac{2\modcont(\error)}{\Qmeas(\act)}\leq\frac{2\modcont(\error)}{\varepsilon_{0}/2}.
\end{equation} 
Therefore, by choosing $\error$ such that the right-hand side of \cref{eq:ineq-omega-eta-eps} is smaller than $\varepsilon$, we obtain a \ac{BWepsE}. 
 
Now, take a  continuous $\descost \colon \flows\times \states\to \reals$. 
Since $\flows\times \states$ is compact, $\descost$ is uniformly continuous and admits a modulus of continuity $\modcont_{\descost}$ such that for all $\stateth\in\states$, 
for all $(\flowprof,\flowprofz) \in \flows\times\flows$,
\begin{equation}
\label{eq:ModulusCont-2} 
\abs*{\descost(\flowprof,\stateth)- \descost(\flowprofz,\stateth)}\leq\modcont_{\descost}(\max_{\actalt}\abs{\flow_{\actalt}-\flowz_{\actalt}}).
\end{equation}
For each $\stateth\in\states$, we have
\begin{equation}
\label{eq:ModContF}
\begin{split}
\abs*{\sum_{\flowprof\in\suppflows}\outcome(\flowprof\mid\stateth)\descost(\widetilde\flowprof,\stateth) - \sum_{\flowprof\in\suppflows}\outcome(\flowprof\mid\stateth) \descost(\flowprof,\stateth)}
&\leq
\sum_{\flowprof\in\suppflows}\outcome(\flowprof\mid\stateth)
\abs*{\descost(\widetilde\flowprof,\stateth) - \descost(\flowprof,\stateth)}\\
&\leq
\sum_{\flowprof\in\suppflows}\outcome(\flowprof\mid\stateth)\modcont_{\descost}(\error)\\
&\leq
\modcont_{\descost}(\error),
\end{split}    
\end{equation}
which is less or equal to $\varepsilon$ for $\error$ small. 

If  flows in $\suppflows$ have rational coefficients with common denominator $\nat$, then we can choose $\widetilde\flowprof=\flowprof$ and $\error=0$, and we have a Bayesian Wardrop $0$-equilibrium. 
Finally, notice that if $\descost$ is $\lipschitz$-Lipschitz (for the sup norm on flows), then we can choose $\modcont_{\descost}(\error)=\lipschitz\error$. 
Thus, for given $\varepsilon$, the choice of $\error$ and $\infostr$ is the same across all such $\descost$.
\end{proof}

%
%
%
%

\section{Potential Games and Full Implementation}
\label{se:full-implementation}
In this section, we study full implementation of \ac{BCWE} in games with a concave potential.

%
%
%
%

\subsection{Potential Games}
\label{se:potental-gammes}

\begin{definition}
\label{de:potential}
A  basic game  
$\game = \parens*{\actions, \states,\prior,\utilprof}$ is called a \emph{potential game} if for every $\stateth\in\states$, there exists an open superset  
$\widetilde{\flows}$
of $\flows$ in $\reals^\actions$ and a continuously differentiable  function  $\potential_{\stateth} \colon \widetilde{\flows} \to \reals$  such that, for every $\act\in\actions$ and $\flowprof\in\flows$,
\begin{equation}
\label{eq:potential-incomp}
\frac{\partial\potential_{\stateth}(\flowprof)}{\partial\flow_{\act}} = \util_{\act}(\flowprof,\stateth).
\end{equation}
 The function $\potential_{\stateth}$ is called \emph{the potential  in state $\stateth$}.
\end{definition}

As a first example, consider symmetric congestion games, which are defined by:
a finite set of \emph{resources} $\edges$; a set of actions  $\actions \subseteq 2^{\edges}$; for each resource $\edge\in\edges$, a continuous cost function $\cost_{\edge} \colon \reals_{+}\times\states \to \reals_{+}$. 
The payoff of choosing action $\act \in \actions$ in state $\stateth$ is the negative of the sum of the costs of all resources used in $\act$, 
\begin{equation}
\label{eq:congestion}
\util_{\act}(\flowprof,\stateth)\coloneqq-\sum_{\edge\in\act} \cost_{\edge}(\load_{\edge},\stateth), 
\end{equation}
where $\load_{\edge} \coloneqq  \sum_{\act\in \actions, \act\ni\edge} \flow_{\act}$ is the \emph{load} on resource $\edge$ induced by the flow $\flowprof$. 
The potential function of a congestion game is
\begin{equation}
\label{eq:potential-congestion}
\potential_{\stateth}(\flowprof) =-\sum_{\edge\in\edges} \int_{0}^{\load_{\edge}} \cost_{\edge}(t,\stateth) \diff t.
\end{equation}
When the cost function of each resource is increasing for every state,  the potential is concave. 
In \emph{singleton congestion games} each action uses a single resource, \ie $\actions=\{\{\edge\}, \edge\in\edges\}$.
In this class of games, if the cost functions are strictly increasing, the potential is strictly concave.

The following lemma shows that, unlike what happens in games with a finite set of players  \citep[see][]{MonSha:GEB1996}, not all potential games are congestion games. 

\begin{lemma}
\label{lem:potentialcongestion} 
Fix an arbitrary state $\stateth$. 
A game with a potential $\potential_{\stateth} \in \smooth^{3}$ is a congestion game only if
\begin{equation}
\label{eq:partial-three}
\frac{\partial^{3}\potential_{\stateth}(\flowprof)}{\partial\flow_{\act}^{2}\partial\flow_{\actalt}} 
=\frac{\partial^{3}\potential_{\stateth}(\flowprof)}{\partial\flow_{\actalt}^{2}\partial\flow_{\act}}.    
\end{equation}
\end{lemma}

\begin{proof}
Consider a potential game with a potential function $\potential_{\stateth} \in \smooth^{3}$, and suppose that it is a congestion game, with cost functions $\cost_{\edge}(\argdot,\stateth)$, which are then twice  continuously differentiable.  
Combine  \cref{eq:potential-incomp,eq:congestion} to get,
\begin{equation}
\label{eq:potential-congestion2}
\frac{\partial\potential_{\stateth}(\flowprof)}{\partial\flow_{\act}} =-\sum_{\edge\in\act} \cost_{\edge}(\load_{\edge},\stateth).
\end{equation}
This implies that 
\begin{equation}
\label{eq:potential-congestion3}
\frac{\partial^{2}\potential_{\stateth}(\flowprof)}{\partial\flow_{\act}\partial\flow_{\actalt}} =-\sum_{\edge\in\act\cap\actalt} \cost'_{\edge}(\load_{\edge},\stateth)
=\frac{\partial^{2}\potential_{\stateth}(\flowprof)}{\partial\flow_{\actalt}\partial\flow_{\act}} .
\end{equation}
Observe  that $\load_{\edge} =  \sum_{\act'\in \actions, \act\ni\edge} \flow_{\act'}$.
Because we consider only resources $\edge\in\act\cap\actalt$, the flow $\flow_{\act}$ or $\flow_{\actalt} $ only appears in the sum $\flow_{\act}+\flow_{\actalt} $. 
Therefore, taking derivative once more with respect to either $\flow_{\act}$ or $\flow_{\actalt} $ yields then the same result. 
\end{proof}

To go beyond congestion games, we consider the class of \emph{aggregative games}.
This class was implicitly introduced by \citet{DubMasShu:JET1980}, formalized by \citet{Cor:MSS1994}, and recently studied by \citet{Lah:DGA2017} for nonatomic players and a finite set of actions. 
Aggregative games are also given as examples of potential games in  \citet[chapter 3]{San:MIT2010}. 

An aggregative game is defined as follows. 
Let the action set $\actions$ be a finite subset of $\reals_{+}$, so actions can be seen as quantities. 
For $\flow\in\Delta(\actions)$, let the aggregate quantity be $\quantity(\flowprof)=\sum_{\act\in \actions}\act\flow_{\act}$. 
Each payoff $\util_{\act}$ depends only on the chosen action $\act$  and on the aggregate quantity, \ie for each $\act\in\actions$ and for each $\stateth\in\states$, there exists a function $\gfunc_{\act}(\argdot,\stateth)$ such that
$\util_{\act}(\flow,\stateth)=\gfunc_{\act}(\quantity(\flowprof), \stateth)$.

Consider payoffs of the following form:
\begin{equation}
\label{eq:potential-congestion5}
\forall \act\in\actions, \util_{\act}(\flow,\stateth)=\act \invdem(\quantity(\flowprof), \stateth)-\const_{\act}(\stateth),
\end{equation}
where $\invdem(\argdot, \stateth)$ is a function of the aggregate quantity $\quantity(\flowprof)$, and $\const_{\act}(\stateth)$ is a number that does not depend on $\flowprof$. 
This game  admits the potential:
\begin{equation}
\label{eq:potential-aggreg}   \potential_\stateth(\flowprof)=\int_{0}^{\quantity(\flowprof)}\invdem(t,\stateth) \diff t-\sum_{\act} \const_{\act}(\stateth)\flow_{\act}. 
\end{equation}
If $\invdem(\argdot,\stateth)$ is a decreasing function of the quantity, then the potential is concave.\footnote{\citet{San:MIT2010} and \citet{Lah:DGA2017} also consider the case of increasing $\invdem(\argdot,\stateth)$, which models public good contribution games where the value of the good is increasing with the total contribution, and players incur contribution costs. 
In this case, the  potential is convex.} 
This potential can be interpreted as a  demand function and the game is a \emph{Cournot game},  where $\const_{\act}(\stateth)$ is the production cost of quantity $\act$. 
In this model, both  the demand and  the production costs may be uncertain. 
In \cref{ex:cournot-binary} only production costs are uncertain and the potential is strictly concave.

With the help of \cref{lem:potentialcongestion}, assuming that $\invdem(\argdot,\stateth)$ is twice continuously differentiable, we can check whether the game given by \cref{eq:potential-congestion5} is a congestion game or not.
Suppose that $\invdem''(\argdot, \stateth)>0$ and assume that there exist two distinct actions $\act,\actalt \in \actions\setminus\{0\}$. 
Then
\begin{equation}
\label{eq:potential-check}  
\frac{\partial^{3}\potential_{\stateth}(\flowprof)}{\partial\flow_{\act}^{2}\partial\flow_{\actalt}} 
=\act^{2}\actalt \invdem''(\quantity(\flowprof), \stateth)\neq \act \actalt^{2} \invdem''(\quantity(\flowprof),\stateth)=
\frac{\partial^{3}\potential_{\stateth}(\flowprof)}{\partial\flow_{\actalt}^{2}\partial\flow_{\act}}.
\end{equation}
Therefore, the game is not a congestion game. Take for instance a demand function given by  $\invdem(q,\stateth)=\kappa_{\stateth}/(\lambda_{\stateth}+q)$.

\begin{proposition}
\label{pr:potential-aggregative-games}   
A potential game with a potential $\potential_{\stateth} \in \smooth^{2}$ is aggregative if and only if \cref{eq:potential-congestion5} holds.
\end{proposition}

\begin{proof}
Take an aggregative game with payoffs $\util_{\act}(\flow,\stateth)=\gfunc_{\act}(\quantity(\flowprof), \stateth)$ and suppose it admits a potential $\potential_{\stateth}$.
Then, 
\begin{align}
\label{eq:potential-g} 
\frac{\partial\potential_{\stateth}(\flowprof)}{\partial\flow_{\act}}
&=
\gfunc_{\act}(\quantity(\flowprof), \stateth)
\intertext{and, for all $\act,\actalt\in\actions$,}
\label{eq:potential-g-2} 
\frac{\partial^{2}\potential_{\stateth}(\flowprof)}{\partial\flow_{\actalt}\partial\flow_{\act}}
&=
\actalt \gfunc'_{\act}(\quantity(\flowprof), \stateth)
=
\act \gfunc'_{\actalt}(\quantity(\flowprof), \stateth)
=
\frac{\partial^{2}\potential_{\stateth}(\flowprof)}{\partial\flow_{\act}\partial\flow_{\actalt}}.
\end{align}
So for any two actions $\act,\actalt\neq 0$,  the two ratios 
\begin{equation}
\label{eq:ratios}  
\frac{\gfunc'_{\act}(\quantity(\flowprof), \stateth)}{\act}=\frac{\gfunc'_{\actalt}(\quantity(\flowprof), \stateth)}{\actalt}
\end{equation}
are equal to some common function  of $\quantity(\flowprof)$.
Call this function  $\invdem'(\quantity(\flowprof),\stateth)$. 
\cref{eq:potential-congestion5} then follows by integration, where $\const_{\act}(\stateth)$ is an integration constant. 
Note that for action $0$,  \cref{eq:potential-g-2} implies that  $\gfunc'_{0}(\quantity(\flowprof), \stateth)=0$, so   \cref{eq:potential-congestion5} is valid for $\act=0$ as well.
\end{proof}

Finally, notice that every game with two actions is a potential game, though the potential need not be concave (see, \eg \cref{ex-El-Farol}).

%
%
%
%

\subsection{Full Implementation}
\label{suse:full-implementation}
The next proposition shows that for every potential game with a (strictly) concave potential,  the \ac{BWE} payoff profile (outcome) of the game extended by an information structure is unique.

\begin{proposition}
\label{pr:unique-BWE}
Consider a basic game $\game$. 
If the potential is concave for each state, then for every information structure $\infostr$, all \ac{BWE} of $(\game,\infostr)$ have the same payoff profile.  
If the potential is strictly concave for each state, then for every information structure $\infostr$ there exists a unique \ac{BWE} outcome of $(\game,\infostr)$.
\end{proposition}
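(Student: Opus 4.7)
The plan is to associate to $(\game,\infostr)$ a convex \emph{expected potential}
\begin{equation*}
\Phi(\widehat\flowprof)\coloneqq\sum_{\state\in\states}\sum_{\typeprof\in\types}\prior(\state)\prtype(\typeprof\mid\state)\,\potential_{\state}(\flowprof(\typeprof)),\qquad \flowprof(\typeprof)=\sum_{\pop\in\pops}\flowprof^{\pop}(\type^{\pop}),
\end{equation*}
defined on an open superset of $X=\bigtimes_{\pop,\type^{\pop}}\simplex_{\sizepop^{\pop}}(\actions)$, and to identify the \aclp{BWE} of $(\game,\infostr)$ with the minimizers of $\Phi$ on $X$. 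Since the map $\widehat\flowprof\mapsto\flowprof(\typeprof)$ is linear and each $\potential_{\state}$ is convex, $\Phi$ is convex; differentiating and using \eqref{eq:potential-incomp} together with \eqref{eq:caktk} gives $\frac{\partial\Phi}{\partial\flow^{\pop}_{\act}(\type^{\pop})}(\widehat\flowprof)=\prob(\type^{\pop})\,\cost^{\pop,\type^{\pop}}_{\act}(\widehat\flowprof)$. Consequently, the \ac{BWE} condition \eqref{eq:epsBWE-cond} with $\varepsilon=0$ is exactly the \ac{KKT} optimality condition for the separable convex program $\min_{X}\Phi$, and by convexity it is both necessary and sufficient, so the set of \aclp{BWE} coincides with the set of minimizers of $\Phi$ on $X$.

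For the first claim, let $\widehat\flowprof^{1}$ and $\widehat\flowprof^{2}$ be two \aclp{BWE}. Both attain the common value $c=\min_{X}\Phi$, so convexity of $\Phi$ forces $\Phi\equiv c$ on the segment $[\widehat\flowprof^{1},\widehat\flowprof^{2}]\subset X$. The crucial step is to deduce that $\nabla\Phi$ is then constant on that segment. For $z=(1-t)\widehat\flowprof^{1}+t\widehat\flowprof^{2}$ with $t\in(0,1)$ and any direction $v$, the convexity inequality at $z$ gives $\Phi(z+sv)\ge c+s\,\nabla\Phi(z)\cdot v$, while writing $z+sv=(1-t)\bigl(\widehat\flowprof^{1}+\tfrac{s}{1-t}v\bigr)+t\widehat\flowprof^{2}$ and applying convexity gives $\Phi(z+sv)\le(1-t)\Phi\bigl(\widehat\flowprof^{1}+\tfrac{s}{1-t}v\bigr)+tc$. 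Together these yield $\Phi\bigl(\widehat\flowprof^{1}+s'v\bigr)\ge c+s'\nabla\Phi(z)\cdot v$ with $s'=s/(1-t)$; letting $s'\to 0^{\pm}$ produces $\nabla\Phi(\widehat\flowprof^{1})\cdot v=\nabla\Phi(z)\cdot v$, and the symmetric argument at $\widehat\flowprof^{2}$ gives $\nabla\Phi(\widehat\flowprof^{2})\cdot v=\nabla\Phi(z)\cdot v$. Since $v$ is arbitrary, $\nabla\Phi$ is constant on the segment, and dividing componentwise by the flow-independent factor $\prob(\type^{\pop})$ shows that the interim cost profile $\bigl(\cost^{\pop,\type^{\pop}}_{\act}(\widehat\flowprof)\bigr)_{\pop,\type^{\pop},\act}$ is the same at every \ac{BWE}.

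For the second claim, I would rewrite $\Phi(\widehat\flowprof)=\sum_{\typeprof}\fun_{\typeprof}(\flowprof(\typeprof))$ with $\fun_{\typeprof}(y)=\sum_{\state}\prior(\state)\prtype(\typeprof\mid\state)\,\potential_{\state}(y)$; when every $\potential_{\state}$ is strictly convex, $\fun_{\typeprof}$ is strictly convex for each $\typeprof$ of strictly positive marginal probability. Given two minimizers $\widehat\flowprof^{1},\widehat\flowprof^{2}$, their midpoint is again a minimizer, and strict convexity of each such $\fun_{\typeprof}$ in its aggregate argument forces the values $\flowprof(\typeprof)$ induced by $\widehat\flowprof^{1}$ and $\widehat\flowprof^{2}$ to coincide at every $\typeprof$ of positive marginal probability. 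Since \eqref{eq:mu-pi} defines the outcome only from those aggregate values, the \ac{BWE} outcome is unique. The hard part of the plan is the constant-gradient lemma used in the first claim---that a convex $C^{1}$ function which is constant on a segment has constant gradient there---for which the two-sided sandwich between the convexity inequalities above is decisive; the perturbations $s'v$ remain inside the open domain of $\Phi$ automatically for small $|s|$.
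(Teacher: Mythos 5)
Your proof is correct and rests on the same core construction as the paper's: both reduce the \ac{BWE} of $(\game,\infostr)$ to the Wardrop equilibria of an auxiliary multi-population complete-information game whose potential is exactly your $\Phi$ (the paper's $\potential_\prtype$ in \eqref{eq:potential-theta}), with $\partial\Phi/\partial\flow_{\act}^{\pop}(\type^{\pop})=\prob(\type^{\pop})\cost_{\act}^{\pop,\type^{\pop}}(\widehat\flowprof)$. The difference is that the paper black-boxes the two remaining steps by citation---\citet[Proposition~3.1]{San:JET2001} for ``equilibria $=$ minimizers of the potential'' and \citet[Proposition~4]{KoeScaTom:MORfrth} for uniqueness of the equilibrium cost profile under weak convexity---whereas you prove both inline: the first via the \ac{KKT} characterization of the separable convex program over the product of simplices, and the second via the lemma that a differentiable convex function which is constant on a segment of its argmin has constant gradient there (your two-sided sandwich is the standard and correct proof of this). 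Your midpoint argument for the strictly convex case is likewise the standard proof of the fact the paper simply asserts, and you are in fact slightly more careful than the paper in noting that strict convexity of $\fun_{\typeprof}$ only holds, and is only needed, for type profiles $\typeprof$ of positive probability (which, since $\prior$ has full support, are exactly those contributing to the outcome in \eqref{eq:mu-pi}); the word ``marginal'' there should read ``total'' probability of the profile $\typeprof$, but that is cosmetic. In short: same reduction, with the two cited ingredients made self-contained.
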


To prove \cref{pr:unique-BWE}, we  first show that a \ac{BWE} of $(\game,\infostr)$  is equivalent to a \acl{WE} of an auxiliary game with complete information  and multiple populations, which is also a potential game. 
If the basic game has a concave potential in each state, then the potential of the auxiliary game   is  also concave. 
Uniqueness of \acl{WE} payoffs in the auxiliary game follows from  \citet[Proposition 4]{KoeScaTom:MOR2025}.
As a consequence, we get uniqueness of \ac{BWE} payoff profiles for $(\game, \infostr)$.
If the basic game has a strictly concave potential in each state, then the potential of the auxiliary game is a strictly concave function of the interim total flow profiles $(\flowprof(\typeprof))_{\typeprof}$. 
Uniqueness of the equilibrium interim total flow profile follows from the fact that the \acl{WE} of the auxiliary multi-population game with complete information  are the  maximizers of the potential  \citep[see][proposition~3.1]{San:JET2001}.

\begin{proof}[Proof of \cref{pr:unique-BWE}]
Consider a basic $\game$  and an information structure $\infostr=\parens*{\sizepopprof, \types, \prtype}$. 
Define a multi-population auxiliary 
game with complete information where populations are indexed by $(\pop,\type^{\pop})$, for all $\pop\in\pops, \type^{\pop}\in\types^{\pop}$. 
A flow profile for population $(\pop,\type^{\pop})$ is  $\flowprof^{\pop}(\type^{\pop}) = \parens*{\flow_{\act}^{\pop}(\type^{\pop})}_{\act\in\actions}\in \simplex_{\sizepop^{\pop}}(\actions) $.
Given the interim flow profile 
$\widehat \flowprof=\parens*{\flow^{\pop}_{\act}(\type^{\pop})}_{\pop,\type^{\pop},\act}$, the payoff function of population  $(\pop,\type^{\pop})$ is
\begin{equation}
\label{eq:cost-ANF}
\Util_{\act}^{\pop,\type^{\pop}}(\widehat\flowprof)
\coloneqq \sum_{\stateth,\typeprof^{-\pop}} \prior(\stateth) \prtype\parens*{\type^{\pop},\typeprof^{-\pop}\mid\stateth} \util_{\act}
\parens*{\parens*{\sum_{\popalt\in\pops}\flow_{\actalt}^{\popalt}(\type^{\popalt})}_{\actalt\in \actions},\stateth}=\prob(\type^\pop)\util_{\act}^{\pop,\type^{\pop}}(\widehat\flowprof).
\end{equation}
A \acl{WE} of this auxiliary game is identical to a \acl{BWE} of $(\game,\infostr)$.

Suppose that in each state $\stateth$ there is a potential $\potential_{\stateth}$. 
Then, let
\begin{equation}
\label{eq:potential-theta}
\potential_{\prtype}( \widehat\flowprof)
\coloneqq \sum_{\stateth,\type}\prior(\stateth)\prtype(\typeprof\mid\stateth) \potential_{\stateth}\parens*{\parens*{\sum_{\pop}\flow_{\act}^{\pop}(\type^{\pop})}_{\act\in\actions}} = \sum_{\stateth,\type}\prior(\stateth)\prtype(\typeprof\mid\stateth)\potential_{\stateth}(\flowprof(\typeprof)),
\end{equation}
where we use the notation $\flow_{\act}(\typeprof) = \sum_{\pop\in\pops} \flow_{\act}^{\pop}(\type^{\pop})$ and 
$\flowprof(\typeprof) = \parens*{\flow_{\act}(\typeprof)}_{\act\in\actions}.$ 
The function $\potential_{\prtype}$ 
is a potential for the auxiliary game and
\begin{equation}
\label{eq:potential-auxiliary}
\frac{\partial \potential_{\prtype}(\widehat\flowprof)}{\partial \flow_{\act}^{\pop}(\type^{\pop})} = \Util_{\act}^{\pop,\type^{\pop}}(\widehat\flowprof).   
\end{equation} 
If $\potential_{\stateth}$ is strictly concave for each $\stateth$, then $\potential_{\prtype}$  is a strictly concave function of the vector
$(\flowprof(\typeprof))_{\typeprof}$. 
By  \citet[Proposition~3.1]{San:JET2001}, the equilibrium flows are the maximizers of the potential function.
Thus, there is a unique equilibrium vector $(\flowprof(\typeprof))_{\typeprof}$, and therefore a unique equilibrium outcome, given by  \eqref{eq:mu-pi}. If $\potential_{\prtype}$  is (weakly) concave, uniqueness of the equilibrium payoff profile $(\Util_{\act}^{\pop,\type^{\pop}}(\widehat\flowprof))_{\pop,\type^\pop,\act}$ follows from  \citet[proposition 4]{KoeScaTom:MOR2025}.
Uniqueness of the \ac{BWE} payoff profile follows from \cref{eq:cost-ANF}.
\end{proof} 

The next theorem is a full implementation result for games with a strictly concave potential. 
It shows that if $\outcome$ is a \ac{BCWE} with finite support and rational flows, then there exists a direct information structure such that $\outcome$ is implemented as a unique \acl{BWE} outcome. 
Furthermore, the information structure is direct and the \ac{BWE} is obedient. 
If the support of $\outcome$ is not finite or if some flows in the support are not rational, then $\outcome$ is close to the unique \ac{BWE} outcome  $\eq{\outcome}$ for some direct information structure $\infostr$.

\begin{theorem}
\label{thm:fullBI}
Let $\game$ be a basic game with a strictly concave potential in each state. 
Let  $\outcome$ be a \ac{BCWE} and let $\descost \colon \flows \times \states \to \reals$ be a continuous function. 
For all
$\varepsilon>0$,  there exists a direct information structure $\infostr$ such that the \acl{BWE} outcome $\eq{\outcome}$ of $(\game,\infostr)$ is unique and
\begin{equation}
\label{eq:diff-mu-mu-star}   
\abs*{\int  \descost(\flowprof,\stateth) \diff \outcome(\flowprof\mid \stateth)
-
\int  \descost(\flowprof,\stateth) \diff \eq{\outcome}(\flowprof\mid \stateth)} \leq \varepsilon, \quad \forall \stateth\in\states.
\end{equation}
In addition,  if  $\outcome$ has finite support with  rational flows, then we can choose $\varepsilon = 0$ and the unique  \acl{BWE} outcome $\eq{\outcome}$ is the obedient outcome  $\breve\outcome_{\infostr}$.
\end{theorem}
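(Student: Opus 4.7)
The plan is to split the argument into the two cases of the theorem. The finite-support rational-flows case is short, and the general case is handled by approximation combined with a stability argument.

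For the case where $\outcome$ has finite support and rational flows, I would simply apply \cref{thm:partialBI} with $\varepsilon=0$. This yields a direct information structure $\infostr$ such that the obedient profile $\breve\flowprof_{\infostr}$ is a \ac{BWE} of $(\game,\infostr)$ with $\breve\outcome_{\infostr}=\outcome$. Since $\game$ has strictly convex potential in each state, \cref{pr:unique-BWE} gives uniqueness of the \ac{BWE} outcome of $(\game,\infostr)$. Hence $\eq{\outcome}=\breve\outcome_{\infostr}=\outcome$, and \cref{eq:diff-mu-mu-star} holds with $\varepsilon=0$ for any continuous $\descost$.

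For a general \ac{BCWE} $\outcome$ and $\varepsilon>0$, the strategy is first to reduce to a finite-support BCWE via Lemma~2 of \citet{KoeScaTom:MORfrth} (already invoked in the proof of \cref{thm:partialBI}): I can pick a BCWE $\outcome_0$ with finite support whose designer cost, under the continuous $\descost$, is within $\varepsilon/2$ of that of $\outcome$ in every state. I would then apply the construction in the proof of \cref{thm:partialBI} to $\outcome_0$ with a sufficiently small rational-approximation parameter $\error>0$: this produces a direct information structure $\infostr$ whose obedient profile $\breve\flowprof_{\infostr}$ is a \acl{BWepsE} whose deviation error tends to $0$ as $\error\to 0$, and whose obedient outcome $\breve\outcome_{\infostr}$ has designer cost within $\varepsilon/4$ of that of $\outcome_0$. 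By \cref{pr:unique-BWE}, $(\game,\infostr)$ admits a unique \ac{BWE} outcome $\eq{\outcome}$.

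The main obstacle is to close the gap between the unique \ac{BWE} outcome $\eq{\outcome}$ and the obedient outcome $\breve\outcome_{\infostr}$, which in the general case is only an approximate equilibrium rather than a true one. I would prove a quantitative continuity estimate based on the auxiliary multi-population potential from the proof of \cref{pr:unique-BWE}: by strict convexity of the potential in each state, the aggregate potential is strictly convex in the interim total flow profile $(\flowprof(\typeprof))_{\typeprof}$, and the approximate equilibrium condition on $\breve\flowprof_{\infostr}$ translates into approximate first-order stationarity for this potential. A standard strong-convexity-type bound would then force $\breve\flowprof_{\infostr}$ to lie close to the unique minimizer (hence to $\eq{\outcome}$), and uniform continuity of $\descost$ on the compact set $\flows$ transfers this into designer-cost closeness. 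Choosing $\error$ small enough so that the accumulated errors across the approximation steps remain below $\varepsilon$ yields the result.
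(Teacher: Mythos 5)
Your proposal follows essentially the same route as the paper's: the finite-support rational case is dispatched exactly as you describe (\cref{thm:partialBI} with $\varepsilon=0$ plus uniqueness from \cref{pr:unique-BWE}), and the general case is handled by finite-support approximation, the construction from \cref{thm:partialBI}, and a stability argument showing that the approximately obedient equilibrium must be close to the unique minimizer of the auxiliary potential $\potential_\prtype$.

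The one step that does not go through as written is the ``standard strong-convexity-type bound.'' The hypothesis is only that the potential is \emph{strictly} convex in each state, not strongly convex, so no quantitative modulus is available that forces approximate first-order stationary points (or approximate minimizers) of $\potential_\prtype$ to lie within an explicit distance of the minimizer. The paper avoids this by arguing qualitatively: first, \cref{cl:alpMin} shows that any \acl{BWalpE} is an $\alpha$-minimizer of $\potential_\prtype$, via the convexity inequality $\potential_\prtype(\widehat\flowprof)-\potential_\prtype(\widehat\flowprofz)\leq\inner*{\nabla\potential_\prtype(\widehat\flowprof)}{\widehat\flowprof-\widehat\flowprofz}\leq\alpha$; second, \cref{cl:epsMin} uses a compactness/contradiction argument to show that for every $\delta>0$ there is $\alpha>0$ such that every $\alpha$-minimizer has interim total flows $(\flowprof(\typeprof))_{\typeprof}$ within $\delta$ of the unique equilibrium flows. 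Substituting this compactness argument for your strong-convexity estimate repairs the step; the remainder of your plan (uniform continuity of $\descost$ on the compact set $\flows$ and accumulation of the approximation errors) coincides with the paper's proof of \cref{pr:fullBI}, from which \cref{thm:fullBI} follows since $\eq{\outcome}$ is itself a \acl{BWalpE} outcome for every $\alpha>0$.
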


As before, $\descost$ represents any possible welfare function for the designer. 
Also,  for every $\lipschitz>0$, we can choose the same $\infostr$ for all $\lipschitz$-Lipschitz $\descost$.

The structure of the proof is as follows. 
Consider the construction given in the proof of  \cref{thm:partialBI}: 
Approximate $\outcome$ by a distribution over a finite set of rational flows and construct a direct information structure such that the obedient outcome  is a \acl{BWalpE} for some $\alpha>0$, and  is close to $\outcome$. 
The difficulty is that the obedient outcome  need not be an exact \acl{BWE} and that we don't know how to construct this \ac{BWE} explicitly. 
We show that \emph{any}  \acl{BWalpE} is an $\alpha$-maximizer of the potential of the game with the constructed information structure, and that all  $\alpha$-maximizers are   $\varepsilon$-close to each other for some $\varepsilon>0$.  
In particular, the unique \ac{BWE} outcome $\eq{\outcome}$ is close to the obedient outcome and thus, also close to $\outcome$.
This reasoning is summarized in the following proposition.

\begin{proposition}
\label{pr:fullBI}
Let $\game$ be a basic game with a strictly concave potential in each state.
Let  $\outcome$ be a \ac{BCWE}  and let  $\descost \colon \flows \times \states \to \reals$ be a continuous function. 
For all
$\varepsilon>0$,  there exist $\alpha>0$ and a direct information structure $\infostr$ such that the \acl{BWE} outcome $\eq{\outcome}$ of $(\game,\infostr)$ is unique, and  for all 
$\widetilde\outcome\in \BWalpE(\infostr)$,
\begin{equation}
\label{eq:diff-eps}
\abs*{\int  \descost(\flowprof,\stateth) \diff \outcome(\flowprof\mid \stateth)
-
\int  \descost(\flowprof,\stateth) \diff \widetilde\outcome(\flowprof\mid \stateth)}
 \leq \varepsilon, \quad\forall \stateth\in\states.
\end{equation}
\end{proposition}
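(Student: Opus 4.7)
The plan is to combine the direct-implementation construction from \cref{thm:partialBI} with the potential-minimization characterization underlying \cref{pr:unique-BWE}. Fix the \ac{BCWE} $\outcome$, the continuous designer cost $\descost$, and $\varepsilon>0$. Applying the proof of \cref{thm:partialBI} with tolerance $\alpha$ in place of $\varepsilon$ produces a direct information structure $\infostr$ such that the obedient interim flow profile $\breve\flowprof_{\infostr}$ is a \acl{BWalpE} of $(\game,\infostr)$ and the obedient outcome $\breve\outcome_{\infostr}$ satisfies $\abs*{\int\descost(\flowprof,\state)\diff\outcome(\flowprof\mid\state)-\int\descost(\flowprof,\state)\diff\breve\outcome_{\infostr}(\flowprof\mid\state)}\leq\varepsilon/3$ for every $\state$. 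By \cref{pr:unique-BWE}, $(\game,\infostr)$ has a unique \ac{BWE} outcome $\eq\outcome$. It therefore suffices to show that, for $\alpha$ small enough, every $\tilde\outcome\in\BWalpE(\infostr)$ satisfies $\abs*{\int\descost\diff\tilde\outcome(\argdot\mid\state)-\int\descost\diff\eq\outcome(\argdot\mid\state)}\leq\varepsilon/3$ for all $\state$; applying this bound to $\breve\outcome_{\infostr}$ and to an arbitrary $\tilde\outcome$, and invoking the triangle inequality, then yields \cref{eq:diff-eps}.

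The main step converts the local $\alpha$-optimality of any \acl{BWalpE} into a global closeness to $\eq\outcome$ via the auxiliary multi-population complete-information game from the proof of \cref{pr:unique-BWE}, whose potential $\potential_\prtype$ is strictly convex in the vector of interim total flows $(\flowprof(\typeprof))_\typeprof$ and whose Wardrop equilibria coincide with the \acp{BWE} of $(\game,\infostr)$. I would first verify that any \acl{BWalpE} profile $\widehat\flowprof$ is a global $\alpha\abs{\actions}$-minimizer of $\potential_\prtype$: the unnormalized form of the $\alpha$-optimality inequality gives $\Cost_\act^{\pop,\type^\pop}(\widehat\flowprof)\leq\min_\actalt\Cost_\actalt^{\pop,\type^\pop}(\widehat\flowprof)+\alpha$ whenever $\flow_\act^\pop(\type^\pop)>0$, so for any feasible competitor $\widehat\flowprof'$, summing first in $\act$ against $\flow_\act^\pop(\type^\pop)$ and comparing with the corresponding sum against $\widehat\flowprof'$ gives a per-type gap of at most $\alpha\sizepop^\pop$; summing in $(\pop,\type^\pop)$, and using $\abs{\types^\pop}=\abs{\actions}$ together with $\sum_\pop\sizepop^\pop=1$ (both available because $\infostr$ is direct), totals to $\alpha\abs{\actions}$. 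Since by \cref{eq:potential-auxiliary} this sum equals $\langle\nabla\potential_\prtype(\widehat\flowprof),\widehat\flowprof-\widehat\flowprof'\rangle$, convexity of $\potential_\prtype$ delivers $\potential_\prtype(\widehat\flowprof)\leq\potential_\prtype(\widehat\flowprof')+\alpha\abs{\actions}$.

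Strict convexity of $\potential_\prtype$ in $(\flowprof(\typeprof))_\typeprof$ and compactness of its feasible set then yield, by a routine sequential-compactness argument anchored on uniqueness of the minimizer, a modulus $\psi$ with $\psi(\delta)\downarrow 0$ such that every $\delta$-minimizer satisfies $\max_{\typeprof,\act}\abs{\flow_\act(\typeprof)-\eq\flow_\act(\typeprof)}\leq\psi(\delta)$. By \cref{eq:mu-pi}, $\int\descost(\flowprof,\state)\diff\tilde\outcome(\flowprof\mid\state)=\sum_\typeprof\prtype(\typeprof\mid\state)\descost(\flowprof(\typeprof),\state)$, and similarly for $\eq\outcome$, so uniform continuity of $\descost$ on the compact set $\flows\times\states$ turns the sup-norm proximity of $(\flowprof(\typeprof))_\typeprof$ to $(\eq\flowprof(\typeprof))_\typeprof$ into closeness of the two expectations; choosing $\alpha$ so that the corresponding modulus of continuity evaluated at $\psi(\alpha\abs{\actions})$ is at most $\varepsilon/3$ closes the argument.

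The main obstacle is the second paragraph: carefully bookkeeping the weights $\sizepop^\pop$, the type probabilities $\prob(\type^\pop)$, and the equality $\abs{\types^\pop}=\abs{\actions}$ when turning the per-type optimality into a global potential gap. Using the unnormalized version of the \acl{BWalpE} condition (rather than its $\prob(\type^\pop)$-normalized form) removes the awkward $1/\prob(\type^\pop)$ factor before integration and makes the telescoping calculation $\sum_\act\Cost_\act^{\pop,\type^\pop}\flow_\act^\pop(\type^\pop)\leq\parens*{\min_\actalt\Cost_\actalt^{\pop,\type^\pop}+\alpha}\sizepop^\pop$ per $(\pop,\type^\pop)$ transparent.
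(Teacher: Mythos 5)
Your proposal is correct and follows essentially the same route as the paper's proof: reuse the construction from \cref{thm:partialBI}, show via the auxiliary potential $\potential_\prtype$ that every \acl{BWalpE} is a global near-minimizer, use strict convexity plus a compactness/contradiction argument to bound the distance of any near-minimizer's interim total flows from the unique equilibrium flows, and assemble the estimate by the triangle inequality and uniform continuity of $\descost$. The only (immaterial) difference is your more careful constant $\alpha\abs{\actions}$ in the potential-gap bound, where the paper's summation over $(\pop,\type^{\pop})$ records only $\alpha$; since $\alpha$ is chosen freely at the end, this changes nothing.
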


Note that  $\eq{\outcome}$ belongs to  $\BWalpE(\infostr)$ for all $\alpha\geq 0$.

\begin{proof}[Proof of \cref{pr:fullBI}]
We take up the construction from the proof of  \cref{thm:partialBI}:  
fix $\alpha>0$, assume that $\outcome$ has finite support,   and consider the information structure $\infostr$ constructed there  with parameter $\error>0$, such that the obedient flow profile $\breve\flowprof_{\infostr}$ with outcome $\breve\outcome_{\infostr}$ is a \acl{BWalpE} and 
\begin{equation}
\label{eq:diff-alpha}    
\abs*{\int  \descost(\flowprof,\stateth) \diff \outcome(\flowprof\mid \stateth)
-
\int  \descost(\flowprof,\stateth) \diff \breve\outcome_{\infostr}(\flowprof\mid \stateth)} 
\leq \alpha, \quad\forall \stateth\in\states.
\end{equation}

\begin{claim}
\label{cl:alpMin}
Suppose that the potential is (weakly) concave in each state. For all $\alpha>0$, any \acl{BWalpE} of the game with information structure $\infostr$, is an $\alpha$-maximizer of the associated potential. 
\end{claim}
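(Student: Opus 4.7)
The plan is to identify a \ac{BWalpE} of $(\game,\infostr)$ with an $\alpha$-equilibrium of the multi-population complete-information auxiliary game from the proof of \cref{pr:unique-BWE}, whose potential
\begin{equation*}
\potential_\prtype(\widehat\flowprof) = \sum_{\state,\typeprof}\prior(\state)\prtype(\typeprof\mid\state)\potential_\state(\flowprof(\typeprof))
\end{equation*}
inherits (weak) convexity from each $\potential_\state$, and whose partial derivative with respect to $\flow_\act^\pop(\type^\pop)$ equals the unnormalized interim cost $\Cost_\act^{\pop,\type^\pop}(\widehat\flowprof) = \prob(\type^\pop)\cost_\act^{\pop,\type^\pop}(\widehat\flowprof)$. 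The target then becomes to show that any \ac{BWalpE} $\widehat\flowprof$ satisfies $\potential_\prtype(\widehat\flowprof) \le \potential_\prtype(\widehat\flowprofz) + \alpha$ for every feasible alternative interim flow profile $\widehat\flowprofz=(\flowz_\act^\pop(\type^\pop))$.

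By convexity of $\potential_\prtype$ and the gradient identity, the first-order underestimate reads
\begin{equation*}
\potential_\prtype(\widehat\flowprofz) - \potential_\prtype(\widehat\flowprof) \ge \sum_{\pop,\type^\pop,\act} \Cost_\act^{\pop,\type^\pop}(\widehat\flowprof)\bigl[\flowz_\act^\pop(\type^\pop) - \flow_\act^\pop(\type^\pop)\bigr],
\end{equation*}
so it suffices to lower-bound the right-hand side by $-\alpha$. For each $(\pop,\type^\pop)$ with $\prob(\type^\pop)>0$, set $\cost_{\min}^{\pop,\type^\pop} \coloneqq \min_\actalt \cost_\actalt^{\pop,\type^\pop}(\widehat\flowprof)$. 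The \ac{BWalpE} condition says that every action $\act$ with $\flow_\act^\pop(\type^\pop)>0$ satisfies $\cost_\act^{\pop,\type^\pop}(\widehat\flowprof) \le \cost_{\min}^{\pop,\type^\pop}+\alpha$, and by the feasibility constraints $\sum_\act \flow_\act^\pop(\type^\pop) = \sum_\act \flowz_\act^\pop(\type^\pop) = \sizepop^\pop$ one obtains
\begin{equation*}
\sum_\act \cost_\act^{\pop,\type^\pop}(\widehat\flowprof)\flow_\act^\pop(\type^\pop) \le (\cost_{\min}^{\pop,\type^\pop}+\alpha)\sizepop^\pop \le \sum_\act \cost_\act^{\pop,\type^\pop}(\widehat\flowprof)\flowz_\act^\pop(\type^\pop) + \alpha\sizepop^\pop.
\end{equation*}
Multiplying by $\prob(\type^\pop)$ (to pass from $\cost$ to $\Cost$) and summing over $(\pop,\type^\pop)$ yields a lower bound of $-\alpha\sum_{\pop,\type^\pop} \sizepop^\pop\prob(\type^\pop) = -\alpha\sum_\pop \sizepop^\pop = -\alpha$; entries with $\prob(\type^\pop)=0$ contribute nothing because $\Cost_\act^{\pop,\type^\pop}$ vanishes there.

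The only real obstacle is tracking the normalization between the per-capita cost $\cost^{\pop,\type^\pop}$, in which the \ac{BWalpE} inequality is phrased, and the unnormalized $\Cost^{\pop,\type^\pop} = \prob(\type^\pop)\cost^{\pop,\type^\pop}$, which matches the gradient of $\potential_\prtype$. The factor $\prob(\type^\pop)$ is precisely what converts the pointwise slack of size $\alpha\sizepop^\pop$ into the global slack $\alpha$, using $\sum_{\pop,\type^\pop}\sizepop^\pop\prob(\type^\pop) = 1$.
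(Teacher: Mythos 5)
Your proof is correct and follows essentially the same route as the paper's: translate the \acl{BWalpE} conditions into the variational inequality $\inner*{\nabla\potential_\prtype(\widehat\flowprof)}{\widehat\flowprof-\widehat\flowprofz}\le\alpha$ and conclude by the first-order convexity bound. Your bookkeeping is in fact slightly more careful---you keep the factor $\prob(\type^\pop)$ until the end and use $\sum_{\type^\pop}\prob(\type^\pop)=1$ per population, and you explicitly dispose of the types with $\prob(\type^\pop)=0$---whereas the paper bounds $\alpha\prob(\type^\pop)\le\alpha$ early; both yield the claim.
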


\begin{proof}
[Proof of \cref{cl:alpMin}]
Consider  an arbitrary \acl{BWalpE} $\widehat\flowprof$. 
For all $\pop\in\pops$, for all $\type^{\pop}\in\types^{\pop}$, and for all $\act,\actalt\in\actions$, if $\flow_{\act}^{\pop}(\type^{\pop}) > 0$ then, 
\begin{equation}
\label{eq:cost+alpha}  
\Util_{\act}^{\pop,\type^{\pop}}(\widehat\flowprof)
\ge
\Util_{\actalt}^{\pop,\type^{\pop}}(\widehat\flowprof)-\alpha\prob(\type^{\pop})
\ge
 \Util_{\actalt}^{\pop,\type^{\pop}}(\widehat\flowprof)-\alpha,
\end{equation}
where $\Util_{\act}^{\pop,\type^{\pop}}$ is defined as in \eqref{eq:cost-ANF}.
This implies that, for all $\pop\in\pops$, for all $\type^{\pop}\in\types^{\pop}$,
\begin{equation}
\label{eq:cost+alpha-2}  
\sum_{\act\in\actions}\frac{\flow_{\act}^{\pop}(\type^{\pop})}{\sizepop^{\pop}}\Util_{\act}^{\pop,\type^{\pop}}(\widehat\flowprof)
\ge
\max_{\actalt\in\actions}\Util_{\actalt}^{\pop,\type^{\pop}}(\widehat\flowprof)-\alpha
=
\max_{\widehat\flowprofz}\sum_{\act\in\actions}\frac{\flowz_{\act}^{\pop}(\type^{\pop})}{\sizepop^{\pop}}\Util_{\act}^{\pop,\type^{\pop}}(\widehat\flowprof)-\alpha.
\end{equation}
That is, for all $\pop\in\pops$, for all $\type^{\pop}\in\types^{\pop}$ and all $\widehat\flowprofz$,
\begin{equation}
\label{eq:gamma-alpha}  
\sum_{\act\in\actions}\parens*{\flow_{\act}^{\pop}(\type^{\pop})-\flowz_{\act}^{\pop}(\type^{\pop})} \Util_{\act}^{\pop,\type^{\pop}}(\widehat\flowprof)
\ge
-\sizepop^{\pop}\alpha.
\end{equation}
Summing up over $\pop$ and $\type^{\pop}$, we get,
\begin{equation}
\label{eq:inner-prod}  
\inner*{\nabla\potential_{\prtype}( \widehat\flowprof)}{\widehat\flowprof-\widehat\flowprofz} =\sum_\pop \sum_{\type^{\pop}} \sum_{\act\in\actions}\parens*{\flow_{\act}^{\pop}(\type^{\pop})-\flowz_{\act}^{\pop}(\type^{\pop})} \Util_{\act}^{\pop,\type^{\pop}}(\widehat\flowprof)
\ge
-\sum_\pop\sizepop^{\pop}\alpha=-\alpha,
\end{equation}
and since $\potential_{\prtype}$ is concave, this gives
\begin{equation}
\label{eq:less-alpha}  
\potential_{\prtype}(\widehat\flowprof) - \potential_{\prtype}(\widehat\flowprofz) 
\ge
\inner*{\nabla\potential_{\prtype}( \widehat\flowprof)}{\widehat\flowprof-\widehat\flowprofz}
\ge
-\alpha.
\end{equation}
Thus $\potential_{\prtype}(\widehat\flowprof)\ge\max\potential_{\prtype} - \alpha$. 
That is, any \acl{BWalpE} is an $\alpha$-maximizer of the potential.
\end{proof}

From strict concavity, let $(\flowprof^{*}(\typeprof))_{\typeprof}$ be the  unique \acl{BWE} flow vector. 
We argue now that for any $\alpha$-maximizer $\widehat\flowprof$ of  $\potential_{\prtype}$, the flow vector $(\flowprof(\typeprof))_{\typeprof}$ is close to $(\flowprof^{*}(\typeprof))_{\typeprof}$.

\begin{claim}
\label{cl:epsMin}
For all $\delta>0$, there exists $\alpha>0$ such that for all $\widehat\flowprof$, $\potential_{\prtype}(\widehat\flowprof)\geq\max\potential_{\prtype}-\alpha$ implies that for all $\act$ and all $\typeprof$, 
$\abs*{\flow_{\act}(\typeprof)-\flow^{*}_{\act}(\typeprof)} \leq \delta$. \end{claim}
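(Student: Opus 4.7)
The plan is a contradiction argument based on compactness and on the strict convexity of $\potential_\prtype$ when expressed in the right variables. Recall from the proof of \cref{pr:unique-BWE} that $\potential_\prtype(\widehat\flowprof)$ depends on $\widehat\flowprof$ only through the total interim flow profile $\parens*{\flowprof(\typeprof)}_{\typeprof}$, and that under strict convexity of each $\potential_\state$ the induced map
\begin{equation*}
\parens*{\flowprof(\typeprof)}_\typeprof \longmapsto \sum_{\state,\typeprof}\prior(\state)\prtype(\typeprof\mid\state)\potential_\state\parens*{\flowprof(\typeprof)}
\end{equation*}
is strictly convex in the subvector indexed by type profiles $\typeprof$ with positive marginal probability $\sum_\state \prior(\state)\prtype(\typeprof\mid\state)>0$. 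The unique \ac{BWE} of \cref{pr:unique-BWE} pins down the minimizer $\flow^{*}(\typeprof)$ at those type profiles.

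The argument then proceeds by contradiction: suppose there exist $\delta>0$ and a sequence $\widehat\flowprof^n$ with $\potential_\prtype(\widehat\flowprof^n)\to \min\potential_\prtype$ but $\max_{\act,\typeprof}\abs*{\flow^n_\act(\typeprof)-\flow^{*}_\act(\typeprof)}>\delta$ attained at some positive-probability $\typeprof$. Since each $\flowprof^n(\typeprof)$ lives in the compact simplex $\simplex(\actions)$, I would pass to a subsequence along which $\flowprof^n(\typeprof)\to\bar\flowprof(\typeprof)$ for every $\typeprof$. Continuity of each $\potential_\state$ then yields $\potential_\prtype(\widehat\flowprof^n)\to \sum_{\state,\typeprof}\prior(\state)\prtype(\typeprof\mid\state)\potential_\state(\bar\flowprof(\typeprof))$, so $\bar\flowprof$ attains the minimum. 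Strict convexity forces $\bar\flowprof(\typeprof)=\flow^{*}(\typeprof)$ at every positive-probability $\typeprof$, contradicting the assumed gap.

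The main subtlety is the treatment of type profiles with $\sum_\state \prior(\state)\prtype(\typeprof\mid\state)=0$: at such $\typeprof$ the potential is insensitive to $\flowprof(\typeprof)$ and no uniqueness obtains, so the claim cannot literally hold ``for all $\typeprof$''. This is not a genuine obstacle, since the claim is only used in \cref{pr:fullBI} to control the outcome defined by \cref{eq:mu-pi}, which ignores zero-probability type profiles; one simply reads the claim as quantifying over positive-probability $\typeprof$ (equivalently, one may fix $\flow^{*}(\typeprof)$ arbitrarily at the remaining type profiles). With this qualification, the statement reduces to the standard compactness fact that the sublevel sets of a continuous strictly convex function on a compact set shrink to the unique minimizer as the level approaches the minimum.
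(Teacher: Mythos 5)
Your argument is correct and is essentially the paper's own proof: a contradiction via compactness, passing to a convergent subsequence whose limit must attain $\min\potential_\prtype$ by continuity, and then invoking uniqueness of the minimizing total flow vector $(\flowprof^{*}(\typeprof))_{\typeprof}$ from strict convexity. Your additional caveat about type profiles with zero marginal probability (where the potential is insensitive to the flow and the claim must be read as quantifying over positive-probability $\typeprof$) is a legitimate refinement that the paper glosses over, but it does not change the route or affect how the claim is used in \cref{pr:fullBI}.
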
 
 
\begin{proof}[Proof of \cref{cl:epsMin}]
By contradiction, suppose that there exists $\delta>0$ such that for all $\alpha=1/\run$ with $\run\in\naturals$, there exists $\widehat\flowprof_{\run}$ with $\potential_{\prtype}(\widehat\flowprof_{\run})\geq\max\potential_{\prtype} - 1/\run$ and  $\max_{\act,\typeprof} \abs*{\flow_{n,\act}(\typeprof)-\flow^{*}_{\act}(\typeprof)} > \delta$. 
By compactness, take a subsequence and assume that $\widehat\flowprof_{\run}\to\widehat\flowprof_\infty$ as $\run\to\infty$.
Letting $\run\to\infty$, we get that $\potential_{\prtype}(\widehat\flowprof_\infty)\geq\max\potential_{\prtype}$ and $\max_{\act,\typeprof} \abs*{\widehat\flow_{\infty,\act}(\typeprof)-\flow^{*}_{\act}(\typeprof)} \ge \delta$. 
This contradicts the uniqueness of $\flow^{*}_{\act}(\typeprof)$.
\end{proof}
 
\medskip
Take a \acl{BWalpE}  with flow vector $(\widetilde\flowprof(\typeprof))_{\typeprof}$ and  outcome $\widetilde\outcome$. We have for each $\stateth\in\states$,
\begin{equation*}
\begin{split}
\abs*{\int  \descost(\flowprof,\stateth) \diff \eq{\outcome}(\flowprof\mid \stateth)
-
\int  \descost(\flowprof,\stateth) \diff \widetilde\outcome(\flowprof\mid \stateth)}
&=
\abs*{\sum_{\typeprof}\prtype(\typeprof\mid\stateth)   \descost(\flowprof^{*}(\typeprof),\stateth) 
-
\sum_{\typeprof}\prtype(\typeprof\mid\stateth)   \descost(\widetilde\flowprof(\typeprof),\stateth)}\\
&\leq
\sum_{\typeprof}\prtype(\typeprof\mid\stateth) \abs*{\descost(\flowprof^{*}(\typeprof),\stateth) - \descost(\widetilde\flowprof(\typeprof),\stateth)}\\
&\leq
\modcont_{\descost}(\delta).
\end{split}
\end{equation*}
Summing up, for all $\stateth\in\states$, we have the following inequalities.
By construction, we have
\begin{equation}
\label{eq:construction} 
\abs*{\int  \descost(\flowprof,\stateth) \diff \outcome(\flowprof\mid \stateth)- \int  \descost(\flowprof,\stateth) \diff \breve\outcome_{\infostr}(\flowprof\mid \stateth)} \leq \alpha.
\end{equation}
Moreover, because the obedient  interim profile is a \acl{BWalpE}, we have
\begin{equation}
\label{eq:obedient-BWE}  
\abs*{\int  \descost(\flowprof,\stateth) \diff  \breve\outcome_{\infostr}(\flowprof\mid \stateth)- \int  \descost(\flowprof,\stateth) \diff \eq{\outcome}(\flowprof\mid \stateth)} \leq \modcont_{\descost}(\delta)
\end{equation}
and for any  \acl{BWalpE}:
\begin{equation}
\label{eq:forall-BWE} 
\abs*{\int  \descost(\flowprof,\stateth) \diff \eq{\outcome}(\flowprof\mid \stateth)-\int  \descost(\flowprof,\stateth) \diff \widetilde\outcome(\flowprof\mid \stateth)} \leq  \modcont_{\descost}(\delta).
\end{equation}
It follows that 
\begin{equation}
\label{eq:less-apha+2omega} 
\abs*{\int  \descost(\flowprof,\stateth) \diff \outcome(\flowprof\mid \stateth)-\int  \descost(\flowprof,\stateth) \diff \widetilde\outcome(\flowprof\mid \stateth)} \leq \alpha+2  \modcont_{\descost}(\delta),
\end{equation}
which is smaller than $\varepsilon$ for $\alpha,\delta$ small enough.
\end{proof}

\begin{proof}
[Proof of \cref{thm:fullBI}]    
The result follows from the fact that  the unique equilibrium outcome $\eq{\outcome}$ is an \acl{BWalpE} outcome for any $\alpha>0$. 
As in the proof of \cref{thm:partialBI}, if  $\outcome$ has finite support with  rational flows, then we can choose $\varepsilon = 0$, and the obedient outcome  $\breve\outcome_{\infostr}$ is the unique  \acl{BWE} outcome $\eq{\outcome}$.
\end{proof}

The next theorem shows that in games with a weakly concave potential we can fully implement the ex-ante expected total payoff of any  \acl{BCWE}, where  the total payoff in state $\stateth$ given flows $\flowprof$ is:
\begin{equation}
\label{eq:SC} 
\socialcost(\flowprof,\stateth) \coloneqq \sum_{\act}\flow_{\act}\util_{\act}(\flowprof,\stateth).
\end{equation}

\begin{theorem}
\label{thm:fullBI2}
Let $\game$ be a basic game with a concave potential in each state and let  $\outcome$ be a \ac{BCWE}.  
For all
$\varepsilon>0$,  there exists a direct information structure $\infostr$ such that for each \acl{BWE} outcome $\eq{\outcome}$ of $(\game,\infostr)$,
\begin{equation}
\label{eq:diff-mu-mu-star-2}   
\abs*{\sum_\stateth\prior(\stateth)\int  \socialcost(\flowprof,\stateth) \diff \outcome(\flowprof\mid \stateth)
-
\sum_\stateth\prior(\stateth)\int  \socialcost(\flowprof,\stateth) \diff \eq{\outcome}(\flowprof\mid \stateth)} \leq \varepsilon.
\end{equation}
\end{theorem}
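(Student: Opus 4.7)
My plan is to mirror the proof of \cref{thm:fullBI} via \cref{pr:fullBI}, substituting \cref{pr:unique-BWE} (uniqueness of the BWE cost profile under weak convexity) for the strict-convexity argument used there. The first and key observation is that the ex ante expected social cost at a BWE depends on the interim flow profile only through the cost profile. Indeed, expanding $\socialcost$ and exchanging summations gives, for the outcome $\outcome$ associated with any interim flow profile $\widehat\flowprof$ via \eqref{eq:mu-pi},
\begin{equation*}
\sum_{\state}\prior(\state)\int \socialcost(\flowprof,\state)\diff\outcome(\flowprof\mid\state) = \sum_{\pop}\sum_{\type^{\pop}}\sum_{\act} \flow_{\act}^{\pop}(\type^{\pop})\,\Cost_{\act}^{\pop,\type^{\pop}}(\widehat\flowprof);
\end{equation*}
at a BWE, every positive-flow action is cost-minimizing for its population, so the right-hand side collapses to $\sum_{\pop,\type^{\pop}}\sizepop^{\pop}\prob(\type^{\pop})\min_{\actalt}\cost_{\actalt}^{\pop,\type^{\pop}}(\widehat\flowprof)$, which by \cref{pr:unique-BWE} is identical across every BWE of $(\game,\infostr)$. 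Call this common value $V^{\ast}(\infostr)$; the theorem reduces to choosing $\infostr$ so that $V^{\ast}(\infostr)$ is within $\varepsilon$ of the ex ante expected social cost under $\outcome$.

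I would then follow the recipe behind \cref{pr:fullBI}: first, invoke the density of finite-support \acp{BCWE} \citep[Lemma~2]{KoeScaTom:MORfrth} to approximate $\outcome$ by $\outcome^{(1)}$ with finite support such that the two ex ante social costs are within $\varepsilon/3$; then, apply the construction in the proof of \cref{thm:partialBI} with a rational-approximation parameter $\error>0$ to produce a direct information structure $\infostr$ for which $\breve\flowprof_{\infostr}$ is a BW$\alpha$E (with $\alpha$ controlled by $\error$) and the obedient outcome $\breve\outcome_{\infostr}$ is within $\varepsilon/3$ of $\outcome^{(1)}$ in ex ante social cost. In the special subcase where $\outcome^{(1)}$ already has rational flows, \cref{thm:partialBI} delivers an exact BWE as the obedient profile and the identity of the first paragraph closes the argument immediately, with no further bridging step.

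The main obstacle is bridging the BW$\alpha$E obedient profile to an actual BWE of $(\game,\infostr)$ in terms of social cost, i.e., controlling $|V^{\ast}(\infostr)-\sum_{\state}\prior(\state)\int\socialcost\diff\breve\outcome_{\infostr}(\argdot\mid\state)|$. For this I would prove an analog of \cref{cl:epsMin} phrased in terms of social cost rather than flow: \emph{for a fixed $\infostr$ and every $\delta>0$, there exists $\alpha_{0}>0$ such that every $\alpha_{0}$-minimizer of $\potential_{\prtype}$ has ex ante expected social cost within $\delta$ of $V^{\ast}(\infostr)$.} The ingredients are \cref{cl:alpMin} (every BW$\alpha$E is an $\alpha$-minimizer of $\potential_{\prtype}$), the fact from the proof of \cref{pr:unique-BWE} that the $0$-minimizer set of $\potential_{\prtype}$ coincides with the (non-empty, compact) set of BWE on which the ex ante expected social cost is identically $V^{\ast}(\infostr)$, and continuity of this social cost in $\widehat\flowprof$; a contradiction/subsequence argument mirroring the proof of \cref{cl:epsMin} then yields the claim. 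The proof is completed by a coordinated choice of parameters: fix $\error$ small enough that $\breve\outcome_{\infostr}$ is within $\varepsilon/3$ of $\outcome^{(1)}$ in social cost, apply the analog above to the resulting $\infostr$ with $\delta=\varepsilon/3$ to obtain $\alpha_{0}$, and shrink $\error$ further if necessary so that the BW$\alpha$E parameter $\alpha$ satisfies $\alpha\leq \alpha_{0}$. Summing the three $\varepsilon/3$ bounds by the triangle inequality yields the theorem, since every BWE of $(\game,\infostr)$ has ex ante expected social cost $V^{\ast}(\infostr)$.
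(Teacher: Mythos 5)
Your proof is correct and follows essentially the same route as the paper's: your opening identity is exactly the paper's \cref{cl:totalcost} (the total cost $\TC(\widehat\flowprof)$ equals the ex ante expected social cost), your constant $V^{\ast}(\infostr)$ is the common total cost of all Wardrop equilibria obtained from \cref{pr:unique-BWE}, and your bridging lemma is precisely \cref{cl:epsMinweak}, proved by the same compactness/contradiction argument, followed by the same coordination of $\error$, $\alpha$ and $\delta$. The only cosmetic difference is that you express the common equilibrium value via the collapse to $\sizepop^{\pop}\prob(\type^{\pop})\min_{\actalt}\cost_{\actalt}^{\pop,\type^{\pop}}$ rather than via the equality of cost profiles in \eqref{eq:Costa*a**}.
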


As before,  if  $\outcome$ has finite support and  rational flows,  we can choose $\varepsilon = 0$ and we know that the obedient outcome  $\breve\outcome_{\infostr}$ is a \acl{BWE} outcome.

\cref{thm:fullBI2} applies to  Cournot games. 
In \cref{ex:cournot-binary}, it implies that there exists an information structure (the one described on page~\pageref{page:IS-Cournot}) that implements a \ac{BCWE}  which maximizes total profit in \emph{every} \acl{BWE} of the basic game extended with this information structure. 
However, \cref{thm:fullBI2} does not apply to \cref{ex-El-Farol}, which is a potential game but lacks concavity. 
In this case, \cref{thm:partialBI} implies that there exists an information structure (the one described on page~\pageref{page:IS-elfarol}) that implements the \ac{BCWE} which maximizes the total payoff, in a \acl{BWE} of the basic game extended with this information structure, but other, less efficient equilibria also exist.%
\footnote{\cref{thm:fullBI2} does not apply to games with positive payoff externalities such as those in \citet{LiSonZha:JET2023}.
However, \citet{LiSonZha:JET2023} provide an example showing that full implementation may still be possible with complex information structures involving an infinite set of signals.}

\begin{proof} 
[Proof of \cref{thm:fullBI2}]
Assume that $\outcome$ is a \ac{BCWE} with finite support, choose the function $\descost$ as the social payoff function $\socialcost$,  and  fix $\alpha>0$. Consider  the information structure $\infostr$ constructed in the  proof of  \cref{thm:partialBI}
with parameter $\error>0$, such that the obedient flow profile $\breve\flowprof_{\infostr}$ with outcome $\breve\outcome_{\infostr}$ is a \acl{BWalpE} and 
\begin{equation}
\label{eq:diff-alpha-2}    
\abs*{\int  \socialcost(\flowprof,\stateth) \diff \outcome(\flowprof\mid \stateth)
-
\int  \socialcost(\flowprof,\stateth) \diff \breve\outcome_{\infostr}(\flowprof\mid \stateth)} 
\leq \alpha, \quad\forall \stateth\in\states.
\end{equation}
We know that the  set of Wardrop equilibria  of the associated multi-population game with complete information  is the set of  maximizers of the potential  \citep[Proposition~3.1]{San:JET2001}.  
Because there is a concave potential in each state, the potential defined in \cref{eq:potential-theta} is a concave function and the set of Wardrop equilibria is convex.
Also, from  \citet[Proposition 4]{KoeScaTom:MOR2025}, all Wardrop equilibria have the same payoff profile. That is, for any Wardrop equilibrium flow vectors $\flowprof^{*}=(\flowprof^{*}(\typeprof))_{\typeprof}$ and $\flowprof^{**}=(\flowprof^{**}(\typeprof))_{\typeprof}$, for every $\pop,\type^\pop$,
\begin{equation}
\label{eq:Costa*a**} 
\Util_{\act^{*}}^{\pop,\type^{\pop}}(\flowprof^{*})= \Util_{\act^{**}}^{\pop,\type^{\pop}}(\flowprof^{**}),
\end{equation}
where $\act^{*}$ (resp. $\act^{**}$) is any action in the support of $\flowprof^{*}$ (resp.  $\flowprof^{**}$) for population $(\pop,\type^{\pop})$. 
It follows that all Wardrop equilibria have the same total payoff, where the total payoff of an  interim flow profile $\widehat\flowprof$ is,
\begin{equation}
\label{eq:TC}  
\widehat\socialcost(\widehat\flowprof):= \sum_{\pop,\type^\pop}\sum_{\act}\flow_{\act}^{\pop}(\type^{\pop})\Util_{\act}^{\pop,\type^{\pop}}(\widehat\flowprof).
\end{equation}

From \eqref{eq:cost-ANF}, the total payoff in the population game is equal to the expected total payoff in the basic game, i.e.,
\begin{equation}
\label{eq:TC-y-hat}
\begin{split}\widehat\socialcost(\widehat\flowprof)
&=
\sum_{\pop,\type^\pop}\sum_{\act}
\flow_{\act}^{\pop}(\type^{\pop})
 \sum_{\stateth,\typeprof^{-\pop}}
\prior(\stateth) \prtype\parens*{\type^{\pop},\typeprof^{-\pop}\mid\stateth} \util_{\act}(\flowprof(\typeprof),\stateth)\\
&=
\sum_{\stateth}\prior(\stateth)
\sum_{\typeprof} \prtype\parens*{\typeprof\mid\stateth}
\sum_{\act}
\sum_{\pop}\flow_{\act}^{\pop}(\type^{\pop})
 \util_{\act}(\flowprof(\typeprof),\stateth)\\
&=
\sum_{\stateth}\prior(\stateth)
\sum_{\typeprof} \prtype\parens*{\typeprof\mid\stateth}
\sum_{\act}
\flow_{\act}(\typeprof)
 \util_{\act}(\flowprof(\typeprof),\stateth)\\
&=
\sum_{\stateth}\prior(\stateth)
\sum_{\typeprof} \prtype\parens*{\typeprof\mid\stateth}
\socialcost(\flowprof,\stateth).
\end{split}
\end{equation}
It follows that the expected total payoff is the same in all \acl{BWE} outcomes of $\game$ with information structure $\infostr$.
From \cref{cl:alpMin}, any \acl{BWalpE} of the game with information structure $\infostr$, is a $\alpha$-maximizer of the associated potential. 
To complete the proof, we use the following argument: 

\begin{claim}
\label{cl:epsMinweak}
For all $\delta>0$, there exists $\alpha>0$ such that for all $\widehat\flowprof$, $\potential_{\prtype}(\widehat\flowprof)\geq\max\potential_{\prtype}-\alpha$ implies 
\begin{equation}
\label{eq:diff-TC} 
\abs*{\widehat\socialcost(\widehat\flowprof)-\widehat\socialcost(\widehat\flowprof^{*})} \leq \delta,
\end{equation}
where $\widehat\flowprof^{*}$ is any  \acl{BWE}  of $\game$ with information structure $\infostr$.
\end{claim}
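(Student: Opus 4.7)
The plan is to mimic the compactness argument used in \cref{cl:epsMin}, replacing uniqueness of the equilibrium flow vector (which fails without strict convexity) by the fact that the total cost $\TC$ is \emph{constant} on the set of \acl{BWE}, that is, on $\argmin \potential_\prtype$.

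First I would collect the preliminary facts. By \citet[Proposition~3.1]{San:JET2001}, the set of \acl{BWE} of the auxiliary multi-population game coincides with $\argmin \potential_\prtype$. From \cref{cl:totalcost} and the explicit formula in \cref{eq:TC-y-hat}, $\TC(\widehat\flowprof)$ equals the expected social cost; since $\flowprof(\typeprof)$ depends linearly on $\widehat\flowprof$ and each $\cost_\act$ is continuous, the map $\widehat\flowprof \mapsto \TC(\widehat\flowprof)$ is continuous on the compact domain of interim flow profiles. Moreover, by the corollary stated just before the claim, $\TC$ takes a common value on all \aclp{BWE}.

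Next I would argue by contradiction, as in \cref{cl:epsMin}. Suppose that, for some $\delta > 0$ and every $n\in\naturals$, one can find $\widehat\flowprof_n$ with $\potential_\prtype(\widehat\flowprof_n) \leq \min \potential_\prtype + 1/n$ and $\abs{\TC(\widehat\flowprof_n) - \TC(\widehat\flowprof^*)} > \delta$. By compactness, pass to a subsequence and assume $\widehat\flowprof_n \to \widehat\flowprof_\infty$. Continuity of $\potential_\prtype$ yields $\potential_\prtype(\widehat\flowprof_\infty) \leq \min \potential_\prtype$, so $\widehat\flowprof_\infty \in \argmin \potential_\prtype$ is a \acl{BWE} of $(\game,\infostr)$. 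Continuity of $\TC$ then gives $\abs{\TC(\widehat\flowprof_\infty) - \TC(\widehat\flowprof^*)} \geq \delta > 0$, contradicting the fact that $\TC$ is constant on the \acl{BWE} set.

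The main (mild) difference with \cref{cl:epsMin} is that, because the potential is only weakly convex, the minimizer of $\potential_\prtype$ need not be unique as a vector, so I cannot deduce uniform closeness of the flow vectors $\flowprof_n(\typeprof)$ to those of $\widehat\flowprof^*$. What rescues the argument is that only the scalar $\TC$ must be controlled, and $\TC$ is already known to be constant on $\argmin \potential_\prtype$; this is precisely the ingredient supplied by the uniqueness of the \acl{BWE} cost profile established in \cref{pr:unique-BWE} combined with \cref{cl:totalcost}.
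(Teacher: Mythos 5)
Your proposal is correct and follows essentially the same argument as the paper: the identical compactness-and-contradiction scheme, with the limit point $\widehat\flowprof_\infty$ being a minimizer of $\potential_\prtype$ (hence a \acl{BWE}) whose total cost must then agree with $\TC(\widehat\flowprof^*)$ by the constancy of $\TC$ on the equilibrium set. Your explicit appeal to the continuity of $\TC$ (and the observation that the limit only yields $\abs{\TC(\widehat\flowprof_\infty)-\TC(\widehat\flowprof^*)}\geq\delta$, which still suffices) is a minor tightening of the paper's wording, not a different route.
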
 

\begin{proof}[Proof of \cref{cl:epsMinweak}]
By contradiction, suppose that there exists $\delta>0$ such that for all $\alpha=1/\run$ with $\run\in\naturals$, there exists $\widehat\flowprof_{\run}$ with $\potential_{\prtype}(\widehat\flowprof_{\run})\geq\max\potential_{\prtype} - 1/\run$ and  $\abs*{\widehat\socialcost(\widehat\flowprof_{\run})-\widehat\socialcost(\widehat\flowprof^{*})} > \delta$. 
By compactness, take a subsequence and assume that $\widehat\flowprof_{\run}\to\widehat\flowprof_\infty$ as $\run\to\infty$.
Letting $\run\to\infty$, we get that $\potential_{\prtype}(\widehat\flowprof_\infty)\geq\max\potential_{\prtype}$ and $\abs*{\widehat\socialcost(\widehat\flowprof_\infty)-\widehat\socialcost(\widehat\flowprof^{*})} \ge \delta$. 
This is a contradiction because   $\widehat\flowprof_\infty$ would then be a maximizer of the potential, thus a \acl{BWE}, with a different total payoff.
\end{proof}
 
This concludes the proof of \cref{thm:fullBI2}.
\end{proof}

%
%
%
%

\section{Optimality of Deterministic \ac{BCWE}}
\label{se:optimality-deterministic}

The following proposition identifies a class of games with binary action sets in which, in order to maximize the designer's payoff function $\descost$,  attention can be restricted to deterministic flows conditional on the state.  
 A \ac{BCWE} $\outcome$ is said to be \emph{deterministic} if, in each state $\stateth\in\states$, a single flow profile is implemented with probability one. 
 Hence, a deterministic \ac{BCWE} can be identified with a mapping $\flowprof(\argdot)\colon \states\to\flows$ such that, for all $\act,\actalt\in\actions$,
 \begin{equation}
 \label{eq:BDWE-def}
 \sum_{\stateth\in\states} \prior(\stateth)\,
 \flow_{\act}(\stateth)\util_{\act}(\flowprof(\stateth),\stateth)
 \ge
 \sum_{\stateth\in\states} \prior(\stateth)\,
 \flow_{\act}(\stateth)\util_{\actalt}(\flowprof(\stateth),\stateth).
 \end{equation}
 Under complete information, the notion of deterministic \ac{BCWE} coincides with that of a \acl{WE}.

\begin{proposition}
\label{pr:BDWE-cost}
Consider a game where $\actions$ contains exactly two actions; 
for every $\act\in\actions$ and $\stateth\in\states$, the mapping $\flowprof\mapsto\util_{\act}(\flowprof,\stateth)$ is convex and the mapping $\flowprof\mapsto \flow_{\act}\util_{\act}(\flowprof,\stateth)$ is concave. 
Moreover, assume that the designer payoff  function $\flowprof\mapsto\descost(\flowprof,\stateth)$ is concave for every $\stateth\in\states$. 
Then, for every \ac{BCWE} of $\game$, there exists a deterministic \ac{BCWE} for which the expected designer payoff is weakly higher.
\end{proposition}

\begin{proof}
Consider a \ac{BCWE} outcome $\outcome$. 
Then
\begin{equation}
\begin{split}
&\forall \act,\actalt\in\actions,\ 
\sum_{\stateth\in\states}\prior(\stateth)\int \flow_{\act}\util_{\act}(\flowprof,\stateth)\diff\outcome(\flowprof\mid\stateth)
\ge
\sum_{\stateth\in\states}\prior(\stateth)\int \flow_{\act}\util_{\actalt}(\flowprof,\stateth)\diff\outcome(\flowprof\mid\stateth).
\end{split}
\end{equation}
Since there are two actions, we have $\flow_{\actalt}=1-\flow_{\act}$ for $\actalt\neq\act$. 
The incentive constraint corresponding to recommendation $\act$ and deviation $\actalt\neq\act$ can therefore be rewritten as
\begin{equation}
\label{eq:BCWE-a}
\sum_{\stateth\in\states}\prior(\stateth)\int \flow_{\act}\util_{\act}(\flowprof,\stateth)\diff\outcome(\flowprof\mid\stateth)
\ge
\sum_{\stateth\in\states}\prior(\stateth)\int (1-\flow_{\actalt})\util_{\actalt}(\flowprof,\stateth)\diff\outcome(\flowprof\mid\stateth),
\end{equation}
or equivalently,
\begin{equation}
\label{eq:BCWE-b}
\begin{split}
&\sum_{\stateth\in\states}\prior(\stateth)\int \flow_{\act}\util_{\act}(\flowprof,\stateth)\diff\outcome(\flowprof\mid\stateth)
+\sum_{\stateth\in\states}\prior(\stateth)\int \flow_{\actalt}\util_{\actalt}(\flowprof,\stateth)\diff\outcome(\flowprof\mid\stateth)\\
&\quad\ge
\sum_{\stateth\in\states}\prior(\stateth)\int \util_{\actalt}(\flowprof,\stateth)\diff\outcome(\flowprof\mid\stateth).
\end{split}
\end{equation}

For each state $\stateth$, define the average flow profile
\begin{equation}
\label{eq:average-flow-profile}    
\bar\flowprof(\stateth)\coloneqq \int \flowprof\diff\outcome(\flowprof\mid\stateth),
\quad
\text{with components }
\bar\flow_{\act}(\stateth)=\int \flow_{\act}\diff\outcome(\flowprof\mid\stateth).
\end{equation}
By concavity of $\flowprof\mapsto \flow_{\act}\util_{\act}(\flowprof,\stateth)$ and convexity of $\flowprof\mapsto \util_{\actalt}(\flowprof,\stateth)$, Jensen’s inequality yields
\begin{align}
\label{eq:Jensen-1}
\sum_{\stateth\in\states}\prior(\stateth)\int \flow_{\act}\util_{\act}(\flowprof,\stateth)\diff\outcome(\flowprof\mid\stateth)
&\le
\sum_{\stateth\in\states}\prior(\stateth)\,
\bar\flow_{\act}(\stateth)\util_{\act}(\bar\flowprof(\stateth),\stateth),\\
\label{eq:Jensen-1'}
\sum_{\stateth\in\states}\prior(\stateth)\int \flow_{\actalt}\util_{\actalt}(\flowprof,\stateth)\diff\outcome(\flowprof\mid\stateth)
&\le
\sum_{\stateth\in\states}\prior(\stateth)\,
\bar\flow_{\actalt}(\stateth)\util_{\actalt}(\bar\flowprof(\stateth),\stateth),\\
\label{eq:Jensen-2}
\sum_{\stateth\in\states}\prior(\stateth)\int \util_{\actalt}(\flowprof,\stateth)\diff\outcome(\flowprof\mid\stateth)
&\ge
\sum_{\stateth\in\states}\prior(\stateth)\,
\util_{\actalt}(\bar\flowprof(\stateth),\stateth).
\end{align}
Thus, \cref{eq:BCWE-b} implies that $\stateth\mapsto\bar\flowprof(\stateth)$ satisfies the deterministic \ac{BCWE} incentive constraints. 
Finally, concavity of $\flowprof\mapsto\descost(\flowprof,\stateth)$ implies
\begin{equation}
\label{eq:concavity-y}  
\sum_{\stateth\in\states}\prior(\stateth)\int \descost(\flowprof,\stateth)\diff\outcome(\flowprof\mid\stateth)
\le
\sum_{\stateth\in\states}\prior(\stateth)\descost(\bar\flowprof(\stateth),\stateth),
\end{equation}
which completes the proof.
\end{proof}

\appendix
%
%
%
%

\newpage

\section{List of Symbols and Acronyms}
\label{se:symbols}

\begin{longtable}{p{.15\textwidth} p{.82\textwidth}}

$\actions$ &  action set, defined in \cref{de:BANG}\\
$\act$ & action\\
$\actalt$ & action\\
$\BWespE(\infostr)$ & set of \acl{BWepsE} outcomes of   $\game$ with information structure~$\infostr$\\
$\cost_{\edge}$ & cost of resource $\edge$\\
$\smooth^{h}$ & class of functions with a continuous $h$-th derivative\\
$\edge$ & resource in a congestion game\\
$\edges$ & resource set in a congestion game\\
$\infostr$ & $\infostr=\parens*{\sizepopprof, \types, \prtype}$, information structure, defined in \cref{de:info-structure}\\
$\pop$ & population\\
$\pops$ & population set\\
$\lipschitz$ & Lipschitz constant\\
$\prior$ & probability distribution on $\states$, defined in \cref{de:BANG}\\
$\prob(\type^{\pop})$ & $\sum_{\stateth\in\states}\sum_{\typeprof^{-\pop}\in\types^{-\pop}}  \prior(\stateth) \prtype(\type^{\pop},\typeprof^{-\pop} \mid \stateth)$, total probability of type $\type^{\pop}$\\

$\Qmeas^{\pop}(\act)$ & $\sum_{\stateth\in\states}\sum_{\flowprof\in\suppflows}\prior(\stateth)\outcome(\flowprof\mid\stateth)\flow_{\act}^{\pop}$, defined in \eqref{eq:Q-k}\\
$\quantity(\flowprof)$ & aggregate quantity in an aggregative game, defined before \eqref{eq:potential-congestion5}\\
$\types$ & $\types = \times_{\pop\in\pops} \types^{\pop}$, defined in \cref{de:info-structure}\\
$\types^{\pop}$ & set of types for population $\pop$, defined in \cref{de:info-structure}\\
$\types^{-\pop}$ & $\times_{\popalt\neq \pop} \types^{\popalt}$\\
$\util_{\act}$ & payoff of action $\act$, defined in \cref{de:BANG}\\
$\util_{\act}^{\pop,\type^{\pop}}$ & payoff functions of population  $(\pop,\type^{\pop})$, defined in \eqref{eq:caktk}\\
$\utilprof$ & $\parens{\util_{\act}}_{\act\in\actions}$, defined in \cref{de:BANG}\\
$(\util_{\act}^{\pop,\type^\pop}(\widehat\flowprof))_{\pop,\type^\pop,\act}$ & interim payoff profile\\
$\Util_{\act}^{\pop,\type^{\pop}}$ & payoff function of population  $(\pop,\type^{\pop})$, defined in \eqref{eq:cost-ANF}\\
$\socialcost(\flowprof,\stateth)$ & $\sum_{\act}\flow_{\act}\util_{\act}(\flowprof,\stateth)$, defined in \eqref{eq:SC}\\
$\WE(\game)$ & \phantom{}\!\aclp{WE}  of $\game$\\
$\load_{\edge}$ & load on resource $\edge$, defined after \eqref{eq:congestion}\\
$\flowaktype$ & flow on action $\act$ of population $\pop$, when type $\type^{\pop}$ is observed\\ 
$\flow_{\act}(\typeprof)$ & $\sum_{\pop\in\pops} \flowaktype$, interim total flow on action $\act$\\
$\breve\flowprof_{\infostr, \act}^{\pop}(\type^{\pop})$ & $\sizepop^{\pop}\cdot\ind\braces*{\type^{\pop}=\act}$\\
$\flowprof$ & flow profile\\
$\eq{\flowprof}$ & equilibrium flow profile\\
$\flowktypeprof$ & $\parens*{\flowaktype}_{\act\in\actions}\in \simplex_{\sizepop^{\pop}}(\actions)$, interim flow profile of population $\pop$ given type $\type^{\pop}$\\
$\flowprof(\typeprof)$ & $\parens*{\flow_{\act}(\typeprof)}_{\act\in\actions}\in \simplex(\actions)$,  interim total flow profile given $\typeprof$\\
$\widehat{\flowprof}$ & $\parens*{\flowktypeprof}_{\pop,\type^{\pop}} 
\in \bigtimes_{\pop\in\pops}\bracks*{\bigtimes_{\type^{\pop}
\in\types^{\pop}}\simplex_{\sizepop^{\pop}}(\actions)}$, interim flow profile, defined in \eqref{eq:IFP}\\
$\breve\flowprof_{\infostr}$ &  obedient interim flow profile\\
$\flows$ & $\simplex_{\sizepop}(\actions)$, set of flow profiles, defined before \cref{de:BANG}\\
$\sizepop^{\pop}$ & size of population $\pop$, defined in \cref{de:info-structure}\\
$\sizepopprof$ & vector of population sizes, defined in \cref{de:info-structure}\\
$\game$ & $\parens*{\actions, \states,\prior,\utilprof}$, basic game, defined in \cref{de:BANG}  \\
$\dirac_{\act}$ & Dirac measure on $\act$\\
$\simplex(\setA)$ & $\simplex_{1}(\setA)$, simplex of probability measures on $\setA$\\
$\simplex_{\sizepop}(\setA)$ & $\braces*{\flowprof\in\reals^{\setA} \colon \forall \act\in \setA, \flow_{\act} \geq 0, \sum_{\act\in\setA} \flow_{\act}  = \sizepop}$, defined in \eqref{eq:simplex}\\
$\stateth$ & state\\
$\states$ & state space, defined in \cref{de:BANG}\\
$\outcome$ & outcome of the game, defined after \cref{de:BANG}\\
$\eq{\outcome}$ & \acl{BWE} outcome \\
$\breve\outcome_{\infostr}$ & outcome associated to the obedient interim flow profile\\
$\prtype$ & mapping from $\states$ to $\simplex(\types)$, defined in \cref{de:info-structure}\\
$\type^{\pop}$ & possible type for population $\pop$\\
$\typeprof$ & $(\type^{\pop})_{\pop\in\pops}$ type profile\\
$\potential_{\stateth}$ & potential, defined in \cref{de:potential}\\
$\descost$ & designer welfare function, first used in \cref{thm:partialBI}\\
$\modcont(\argdot)$ & modulus of continuity\\
$\ind$ & indicator function\\

\ac{BCWE} & \acl{BCWE}, defined in \cref{de:BCWE}\\
\ac{BWE} & \acl{BWE}, defined after \eqref{eq:epsBWE-cond}\\
\ac{BWepsE} & \acl{BWepsE}, defined in \cref{de:BWepsE}\\
\end{longtable}

\bibliographystyle{apalike}
\bibliography{bibIDLG.bib}

\begin{thebibliography}{}

\bibitem[Acemoglu et~al., 2018]{AceMakMalOzd:OR2018}
Acemoglu, D., Makhdoumi, A., Malekian, A., and Ozdaglar, A. (2018).
\newblock Informational {B}raess' paradox: the effect of information on traffic
  congestion.
\newblock {\em Oper. Res.}, 66(4):893--917.

\bibitem[Ambrogio et~al., 2024]{AmbCiaCom:IFACPOL2024}
Ambrogio, A., Cianfanelli, L., and Como, G. (2024).
\newblock Information design for congestion minimization in transportation
  networks.
\newblock {\em IFAC-PapersOnLine}, 58(30):181--185.

\bibitem[Beckmann et~al., 1956]{BecMcGWin:Yale1956}
Beckmann, M.~J., McGuire, C., and Winsten, C.~B. (1956).
\newblock {\em Studies in the Economics of Transportation}.
\newblock Yale University Press, New Haven, CT.

\bibitem[Bergemann and Morris, 2016]{BerMor:TE2016}
Bergemann, D. and Morris, S. (2016).
\newblock Bayes correlated equilibrium and the comparison of information
  structures in games.
\newblock {\em Theor. Econ.}, 11(2):487--522.

\bibitem[Bergemann and Morris, 2019]{BerMor:JEL2019}
Bergemann, D. and Morris, S. (2019).
\newblock Information design: A unified perspective.
\newblock {\em J. Econ. Lit.}, 57(1):44--95.

\bibitem[Cianfanelli et~al., 2023]{CiaAmbCom:IEEECDC2023}
Cianfanelli, L., Ambrogio, A., and Como, G. (2023).
\newblock Information design in bayesian routing games.
\newblock In {\em 2023 62nd IEEE Conference on Decision and Control (CDC)},
  pages 3945--3949.

\bibitem[Corch\'on, 1994]{Cor:MSS1994}
Corch\'on, L.~C. (1994).
\newblock Comparative statics for aggregative games---the strong concavity
  case.
\newblock {\em Math. Social Sci.}, 28(3):151--165.

\bibitem[D{\'\i}az et~al., 2009]{DiaMitRusSai:WINE2009}
D{\'\i}az, J., Mitsche, D., Rustagi, N., and Saia, J. (2009).
\newblock On the power of mediators.
\newblock In Leonardi, S., editor, {\em Internet and Network Economics}, pages
  455--462. Springer Berlin Heidelberg.

\bibitem[Dubey et~al., 1980]{DubMasShu:JET1980}
Dubey, P., Mas-Colell, A., and Shubik, M. (1980).
\newblock Efficiency properties of strategies market games: An axiomatic
  approach.
\newblock {\em J.Econom. Theory}, 22(2):339--362.

\bibitem[Dughmi, 2017]{Dug:SIGE2017}
Dughmi, S. (2017).
\newblock Algorithmic information structure design: a survey.
\newblock {\em SIGecom Exch.}, 15(2):2--24.

\bibitem[Dworczak and Pavan, 2022]{DwoPav:E2022}
Dworczak, P. and Pavan, A. (2022).
\newblock Preparing for the worst but hoping for the best: robust ({B}ayesian)
  persuasion.
\newblock {\em Econometrica}, 90(5):2017--2051.

\bibitem[Forges, 2020]{For:AES2020}
Forges, F. (2020).
\newblock Games with incomplete information: From repetition to cheap talk and
  persuasion.
\newblock {\em Ann. Econ. Stat.}, 137:3--30.

\bibitem[Griesbach et~al., 2024]{GriHoeKliKog:PAAAICAI2024}
Griesbach, S.~M., Hoefer, M., Klimm, M., and Koglin, T. (2024).
\newblock Information design for congestion games with unknown demand.
\newblock In {\em Proceedings of the Thirty-Eighth AAAI Conference on
  Artificial Intelligence}, pages 9722--9730. AAAI Press.

\bibitem[Hofbauer and Sandholm, 2009]{HofSan:JET2009}
Hofbauer, J. and Sandholm, W.~H. (2009).
\newblock Stable games and their dynamics.
\newblock {\em J. Econom. Theory}, 144(4):1665--1693, 1693.e4.

\bibitem[Hoshino, 2022]{Hos:IER2022}
Hoshino, T. (2022).
\newblock Multi-agent persuasion: leveraging strategic uncertainty.
\newblock {\em Internat. Econom. Rev.}, 63(2):755--776.

\bibitem[Inostroza and Pavan, 2025]{InoPav:TE2025}
Inostroza, N. and Pavan, A. (2025).
\newblock Adversarial coordination and public information design.
\newblock {\em Theor. Econ.}, 20(2):763--813.

\bibitem[Kamenica, 2019]{Kam:ARE2019}
Kamenica, E. (2019).
\newblock Bayesian persuasion and information design.
\newblock {\em Ann. Rev. Econ.}, 11(1):249--272.

\bibitem[Koessler et~al., 2025]{KoeScaTom:MOR2025}
Koessler, F., Scarsini, M., and Tomala, T. (2025).
\newblock Correlated equilibria in large anonymous {B}ayesian games.
\newblock {\em Math. Oper. Res.}, 50(3):2157--2174.

\bibitem[Lahkar, 2017]{Lah:DGA2017}
Lahkar, R. (2017).
\newblock Large population aggregative potential games.
\newblock {\em Dyn. Games Appl.}, 7(3):443--467.

\bibitem[Li et~al., 2023]{LiSonZha:JET2023}
Li, F., Song, Y., and Zhao, M. (2023).
\newblock Global manipulation by local obfuscation.
\newblock {\em J. Econom. Theory}, 207:Paper No. 105575, 44.

\bibitem[Massicot, 2025]{Mas:THESIS2025}
Massicot, O. (2025).
\newblock {\em On the strategic transmission of information to imperfect agents
  and crowds}.
\newblock PhD thesis, University of Illinois Urbana-Champaign.

\bibitem[Massicot and Langbort, 2022]{MasLan:IEEETCNS2022}
Massicot, O. and Langbort, C. (2022).
\newblock Competitive comparisons of strategic information provision policies
  in network routing games.
\newblock {\em IEEE Trans. Control Netw. Syst.}, 9(4):1589--1599.

\bibitem[Mathevet et~al., 2020]{MatPerTen:JPE2020}
Mathevet, L., Perego, J., and Taneva, I. (2020).
\newblock On information design in games.
\newblock {\em J. Pol. Econ.}, 128(4):1370--1404.

\bibitem[Mitsche et~al., 2013]{MitSaaSai:SAGT2013}
Mitsche, D., Saad, G., and Saia, J. (2013).
\newblock The power of mediation in an extended {E}l {F}arol game.
\newblock In {\em Algorithmic game theory}, volume 8146 of {\em Lecture Notes
  in Comput. Sci.}, pages 50--61. Springer, Heidelberg.

\bibitem[Monderer and Shapley, 1996]{MonSha:GEB1996}
Monderer, D. and Shapley, L.~S. (1996).
\newblock Potential games.
\newblock {\em Games Econom. Behav.}, 14(1):124--143.

\bibitem[Morris et~al., 2024]{MorOyaTak:E2024}
Morris, S., Oyama, D., and Takahashi, S. (2024).
\newblock Implementation via information design in binary-action supermodular
  games.
\newblock {\em Econometrica}, 92(3):775--813.

\bibitem[Myerson, 1982]{Mye:JME1982}
Myerson, R.~B. (1982).
\newblock Optimal coordination mechanisms in generalized principal-agent
  problems.
\newblock {\em J. Math. Econom.}, 10(1):67--81.

\bibitem[Rosenthal, 1973]{Ros:IJGT1973}
Rosenthal, R.~W. (1973).
\newblock A class of games possessing pure-strategy {N}ash equilibria.
\newblock {\em Internat. J. Game Theory}, 2:65--67.

\bibitem[Roughgarden, 2007]{Rou:AGT2007}
Roughgarden, T. (2007).
\newblock Routing games.
\newblock In {\em Algorithmic Game Theory}, pages 461--486. Cambridge Univ.
  Press, Cambridge.

\bibitem[Roughgarden and Tardos, 2007]{RouTar:AGT2007}
Roughgarden, T. and Tardos, {\'E}. (2007).
\newblock Introduction to the inefficiency of equilibria.
\newblock In {\em Algorithmic Game Theory}, pages 443--459. Cambridge Univ.
  Press, Cambridge.

\bibitem[Sandholm, 2001]{San:JET2001}
Sandholm, W.~H. (2001).
\newblock Potential games with continuous player sets.
\newblock {\em J. Econom. Theory}, 97(1):81--108.

\bibitem[Sandholm, 2010]{San:MIT2010}
Sandholm, W.~H. (2010).
\newblock {\em Population Games and Evolutionary Dynamics}.
\newblock MIT Press, Cambridge, MA.

\bibitem[Tavafoghi and Teneketzis, 2018]{TavTen:mimeo2018}
Tavafoghi, H. and Teneketzis, D. (2018).
\newblock Strategic information provision in routing games.
\newblock mimeo.

\bibitem[Wardrop, 1952]{War:PICE1952}
Wardrop, J.~G. (1952).
\newblock Some theoretical aspects of road traffic research.
\newblock In {\em Proceedings of the Institute of Civil Engineers, Part II},
  volume~1, pages 325--378.

\bibitem[Wu et~al., 2021]{WuAmiOzd:OR2021}
Wu, M., Amin, S., and Ozdaglar, A.~E. (2021).
\newblock Value of information in {B}ayesian routing games.
\newblock {\em Oper. Res.}, 69(1):148--163.

\bibitem[Zhu and Savla, 2022]{ZhuSav:IEEETCNS2022}
Zhu, Y. and Savla, K. (2022).
\newblock Information design in nonatomic routing games with partial
  participation: computation and properties.
\newblock {\em IEEE Trans. Control Netw. Syst.}, 9(2):613--624.

\end{thebibliography}

\end{document}